%% For double-blind review submission, w/o CCS and ACM Reference (max submission space)
%\documentclass[acmsmall,review,anonymous]{acmart}\settopmatter{printfolios=true,printccs=false,printacmref=false}
\documentclass[acmsmall,nonacm]{acmart}\settopmatter{printfolios=true,printccs=false,printacmref=false}
%% For double-blind review submission, w/ CCS and ACM Reference
%\documentclass[acmsmall,review,anonymous]{acmart}\settopmatter{printfolios=true}
%% For single-blind review submission, w/o CCS and ACM Reference (max submission space)
%\documentclass[acmsmall,review]{acmart}\settopmatter{printfolios=true,printccs=false,printacmref=false}
%% For single-blind review submission, w/ CCS and ACM Reference
%\documentclass[acmsmall,review]{acmart}\settopmatter{printfolios=true}
%% For final camera-ready submission, w/ required CCS and ACM Reference
%\documentclass[acmsmall]{acmart}\settopmatter{}

%% Journal information
%% Supplied to authors by publisher for camera-ready submission;
%% use defaults for review submission.
\acmJournal{PACMPL}
\acmVolume{1}
\acmNumber{CONF} % CONF = POPL or ICFP or OOPSLA
\acmArticle{1}
\acmYear{2018}
\acmMonth{1}
\acmDOI{} % \acmDOI{10.1145/nnnnnnn.nnnnnnn}
\startPage{1}

%% Copyright information
%% Supplied to authors (based on authors' rights management selection;
%% see authors.acm.org) by publisher for camera-ready submission;
%% use 'none' for review submission.
\setcopyright{none}
%\setcopyright{acmcopyright}
%\setcopyright{acmlicensed}
%\setcopyright{rightsretained}
%\copyrightyear{2018}           %% If different from \acmYear

%% Bibliography style
\bibliographystyle{ACM-Reference-Format}
%% Citation style
%% Note: author/year citations are required for papers published as an
%% issue of PACMPL.
\citestyle{acmauthoryear}   %% For author/year citations

%%%%%%%%%%%%%%%%%%%%%%%%%%%%%%%%%%%%%%%%%%%%%%%%%%%%%%%%%%%%%%%%%%%%%%
%% Note: Authors migrating a paper from PACMPL format to traditional
%% SIGPLAN proceedings format must update the '\documentclass' and
%% topmatter commands above; see 'acmart-sigplanproc-template.tex'.
%%%%%%%%%%%%%%%%%%%%%%%%%%%%%%%%%%%%%%%%%%%%%%%%%%%%%%%%%%%%%%%%%%%%%%

%% Some recommended packages.
\usepackage{booktabs}   %% For formal tables:
                        %% http://ctan.org/pkg/booktabs
\usepackage{subcaption} %% For complex figures with subfigures/subcaptions
                        %% http://ctan.org/pkg/subcaption

%\usepackage[pdftex,hidelinks]{hyperref}
%\usepackage{microtype}%if unwanted, comment out or use option "draft"
\usepackage{amssymb}
\usepackage{amsmath}
\usepackage{amsthm}
\usepackage{tikz}
\usepackage{mathpartir}
\usepackage{dashbox}

%\urlstyle{sf}

\makeatletter
\DeclareRobustCommand{\rvdots}{%
  \vbox{
    \baselineskip4\p@\lineskiplimit\z@
    \kern-\p@
    \hbox{.}\hbox{.}\hbox{.}
  }}
\makeatother

\usetikzlibrary{arrows,shapes,snakes,automata,backgrounds,petri,positioning}

\tikzstyle{place}=[circle,draw,minimum size=5mm]
% -> Uses trick from http://tex.stackexchange.com/a/170207
\tikzstyle{transitionV}=[rectangle,thick,fill=black,minimum width=5mm,
    inner ysep=1pt]
\tikzstyle{transitionH}=[rectangle,thick,fill=black,minimum height=5mm,
    inner xsep=1pt]
\tikzstyle{every picture}+=[node distance=0.75cm,>=stealth',auto]

\newcommand{\cmd}[1]{{\mbox{\bf #1}}}
\newcommand{\action}[1]{\ensuremath{\mathsf{#1}}}
\newcommand{\place}[1]{\ensuremath{\mathrm{#1}}}

\newcommand{\llbrace}{\{\hspace{-2.5pt}[}
\newcommand{\rrbrace}{]\hspace{-2.5pt}\}}

\newcommand\dboxed[1]{\dbox{\ensuremath{#1}}}

\definecolor{Green}{rgb}{0.0,0.5,0.0}
\definecolor{White}{rgb}{1,1,1}
\definecolor{Blue}{rgb}{0.0,0.0,1}
\definecolor{Red}{rgb}{1,0,0}
\definecolor{Purple}{rgb}{0.4,0,0.6}
\definecolor{Orange}{rgb}{1,0.65,0}
\definecolor{Gray}{rgb}{0.5,0.5,0.5}
\newcommand{\annot}[1]{{\color{Blue}#1}}
\newcommand{\codecomment}[1]{{\color{Green}#1}}

\DeclareMathSymbol{\mrq}{\mathord}{operators}{`'}

\begin{document}

%% Title information
\title{Abstract I/O Specification}

%% Author information
%% Contents and number of authors suppressed with 'anonymous'.
%% Each author should be introduced by \author, followed by
%% \authornote (optional), \orcid (optional), \affiliation, and
%% \email.
%% An author may have multiple affiliations and/or emails; repeat the
%% appropriate command.
%% Many elements are not rendered, but should be provided for metadata
%% extraction tools.

%% Author with single affiliation.
\author{Willem Penninckx}
\email{willem@willemp.be}
\author{Amin Timany}
\email{amin.timany@cs.kuleuven.be}
\author{Bart Jacobs}
\orcid{0000-0002-3605-249X}
\email{bart.jacobs@cs.kuleuven.be}
\affiliation{
  \department{Department of Computer Science}
  \institution{KU Leuven}
  \streetaddress{Celestijnenlaan 200A}
  \city{Leuven}
  \postcode{3001}
  \country{Belgium}
}

%% Abstract
%% Note: \begin{abstract}...\end{abstract} environment must come
%% before \maketitle command
\begin{abstract}
Penninckx \emph{et al.}~recently proposed an approach for the specification and modular formal verification of the interactive (I/O) behavior of programs, based on an embedding of Petri nets into separation logic. While this approach is scalable and modular in terms of the I/O APIs available to a program, enables composing low-level I/O actions into high-level ones, and enables a convenient verification experience, it does not support high-level I/O actions that involve memory manipulation as well as low-level I/O (such as buffered I/O), or that are in fact ``virtual I/O'' actions that are implemented purely through memory manipulation (such as offered by Java's ByteArrayOutputStream). Furthermore, it does not allow rewriting an I/O specification into an equivalent one.

In this paper, we propose a refined approach that does have these properties. The essential insight is to fix the set of places of the Petri net to be the set of separation logic assertions, thus making available the full power of separation logic for abstractly stating an arbitrary operation's specification in Petri net form, for composing operations into an I/O specification, and for equivalence reasoning on I/O specifications. Our refinement resolves the issue of the justification of the choice of Petri nets over other formalisms such as general state transition systems, in that it ``refines them away'' into the more essential constructs of separating conjunction and \emph{abstract nested triples}. To enable a convenient treatment of input operations, we propose the use of prophecy variables to eliminate their non-determinism.

We illustrate the approach through a number of example programs, including one where subroutines specified and verified using I/O specifications run as threads communicating through shared memory. The theory and examples of the paper have been machine-checked using the Iris library for program verification in the Coq proof assistant.
\end{abstract}

%% 2012 ACM Computing Classification System (CSS) concepts
%% Generate at 'http://dl.acm.org/ccs/ccs.cfm'.
\begin{CCSXML}
<ccs2012>
<concept>
<concept_id>10011007.10011006.10011008</concept_id>
<concept_desc>Software and its engineering~General programming languages</concept_desc>
<concept_significance>500</concept_significance>
</concept>
<concept>
<concept_id>10003456.10003457.10003521.10003525</concept_id>
<concept_desc>Social and professional topics~History of programming languages</concept_desc>
<concept_significance>300</concept_significance>
</concept>
</ccs2012>
\end{CCSXML}

\ccsdesc[500]{Software and its engineering~General programming languages}
\ccsdesc[300]{Social and professional topics~History of programming languages}
%% End of generated code

%% Keywords
%% comma separated list
%\keywords{keyword1, keyword2, keyword3}  %% \keywords are mandatory in final camera-ready submission

%% \maketitle
%% Note: \maketitle command must come after title commands, author
%% commands, abstract environment, Computing Classification System
%% environment and commands, and keywords command.
\maketitle

\section{Introduction}

While great progress has been made in recent decades on approaches for modular formal verification of memory safety of imperative programs, as well as functional correctness of data structures and algorithms, even in the presence of pointer manipulation (or other types of aliasing), coarse-grained or fine-grained concurrency, and higher-order programming, the issue of specifying and modularly verifying the actual observable interactive behavior of the program as a whole, such as achieved through I/O APIs including file I/O, network I/O, graphical user interface APIs, etc., has received much less attention; verification of interactive behavior, if done at all, has mostly been performed at the level of abstract models, often using techniques such as model checking, rather than at the level of source code, integrated with the Hoare logic used for verifying memory safety and data structure correctness. This leaves an unverified gap between the abstract models and the source code.

In this paper, we address this issue by proposing an approach for integrating I/O verification into a Hoare-style modular program verification approach where each function of the program is assigned a specification consisting of a precondition and a postcondition, and then each function is verified against its specification, under the assumption that its callees satisfy theirs. This involves in particular addressing the question of what the specifications of the platform's I/O API functions, and the program's main function, should look like, to express the program's behavioral requirements.
Our goal is that the approach should be applicable to an annotation-based verification tool, such as e.g.~the VeriFast tool, for programs written in real languages, such as C or Java, against real platform APIs, such as \verb|stdio.h| or \verb|java.io|.

As far as we know, the only approach that has been proposed so far to address this goal is the one proposed by \citeN{io}, where a program's behavioral requirements are expressed as \emph{Petri nets} embedded into separation logic \cite{seplogic-ohearn}. This approach scales and is modular with respect to the number of I/O APIs available to a program, it allows the program (or program libraries) to define higher-level I/O actions on top of the platform I/O actions (such that a program module can be agnostic as to which actions are primitive and which are composite), and it integrates well into existing separation logic tool support approaches such as symbolic execution with symbolic heaps, yielding a convenient, low-overhead verification experience. However, this approach does not support high-level I/O actions that involve memory manipulation as well as low-level I/O (such as buffered I/O), or that are in fact ``virtual I/O'' actions that are implemented purely through memory manipulation (such as offered by Java's ByteArrayOutputStream). Furthermore, it does not allow rewriting an I/O specification into an equivalent one.

In this paper, we propose a refined approach that does have these properties. The essential insight is to fix the set of places of the Petri net to be the set of separation logic assertions, thus making available the full power of separation logic for abstractly stating an arbitrary operation's specification in Petri net form, for composing operations into an I/O specification, and for equivalence reasoning on I/O specifications. To enable a convenient treatment of non-deterministic input operations, we propose the use of prophecy variables to eliminate the non-determinism.

Our proposed refinement resolves the issue of the justification of the choice of Petri nets over other formalisms, such as general state transition systems, in that while the refined approach can still be seen as applying the Petri nets formalism, it can also be explained straightforwardly without any reference to Petri nets, as simply applying the essential concepts of separating conjunction and an abstract form of \emph{nested Hoare triples} (e.g.~\cite{nested-hoare-triples,iris3}).

We illustrate the approach through a number of example programs, including one where subroutines specified and verified using I/O specifications run as threads communicating through shared memory. The theory and examples of the paper have been machine-checked using the Iris \cite{iris3} library for verification of concurrent programs in the Coq proof assistant.

The rest of this paper is structured as follows. In \S\ref{sec:lang-with-io}, we define the syntax and the semantics of the programming language that we will use to present our approach. In \S\ref{sec:petrinets}, we recall the Petri net-based specification approach \cite{io} that we refine in this work. In \S\ref{sec:buffered-output}, we introduce our refined approach, and we motivate it by means of the example of buffered output. In \S\ref{sec:dealing-with-input}, we illustrate the problem of proving I/O-style specifications for in-memory input operations by means of a chat server example. In \S\ref{sec:prophvars} we introduce prophecy variables to address this problem. In \S\ref{sec:concur} we extend our programming languge from \S\ref{sec:lang-with-io} and our Hoare logic from \S\ref{sec:petrinets} to support concurrency, which we then use in \S\ref{sec:channels-proof} to verify an implementation of the channels construct used in the chat server against I/O-style specifications. We end the paper with a discussion of related work (\S\ref{sec:related-work}) and a conclusion (\S\ref{sec:conclusion}).

\section{A Programming Language with I/O}\label{sec:lang-with-io}

\subsection{The Programming Language}\label{sec:labeled}

We present the basic idea of our approach in the context of a simple ML-like programming language with support for I/O. Its grammar is as follows:

$$\begin{array}{r @{\;} l}
& x \in \mathit{Vars}, t \in \mathit{IOTags}\\
e \in \mathit{Exprs} ::= & ()\ |\ \mathbf{inl}(e)\ |\ \mathbf{inr}(e)\ |\ (e, e)\ |\ \lambda x.\,e\ |\ x\\
& |\ \mathbf{cases}(e, e, e)\ |\ \mathbf{fst}(e)\ |\ \mathbf{snd}(e)\ |\ e(e)\ |\ \mathbf{assert}(e)\\
& |\ \mathbf{ref}(e)\ |\ !e\ |\ e \leftarrow e\ |\ t(e)
\end{array}$$

We assume a set $\mathit{Vars}$ of program variables and $\mathit{IOTags}$ of \emph{primitive I/O tags}.

We define $\mathbf{let}\ x := e\ \mathbf{in}\ e' = (\lambda x.\;e')(e)$ and $e; e' = \mathbf{let}\ x := e\ \mathbf{in}\ e'$ where $x$ does not appear in $e'$.
We define $\mathbf{true} = \mathbf{inl}(())$ and $\mathbf{false} = \mathbf{inr}(())$, and $\mathbf{if}\ e\ \mathbf{then}\ e_1\ \mathbf{else}\ e_2 = \mathbf{cases}(e, \lambda x.\,e_1, \lambda x.\,e_2)$ where $x$ does not appear in $e_1$ or $e_2$. Furthermore, we define $\mathbf{nil} = \mathbf{inl}(())$ and $\mathbf{cons}(e, e') = \mathbf{inr}((e, e'))$. We encode characters as tuples of booleans and strings as lists of characters.

To define the language's semantics, we define the values and the evaluation contexts as follows:
$$\begin{array}{r @{\;} l}
& \ell \in \mathit{Locs}\\
v \in \mathit{Vals} ::= & ()\ |\ \mathbf{inl}(v)\ |\ \mathbf{inr}(v)\ |\ (v, v)\ |\ \lambda x.\,e\ |\ \ell\\
K \in \mathit{Ctxts} ::= & \bullet\ |\ \mathbf{inl}(K)\ |\ \mathbf{inr}(K)\ |\ (K, e)\ |\ (v, K)\\
& |\ \mathbf{cases}(K, e, e)\ |\ \mathbf{fst}(K)\ |\ \mathbf{snd}(K)\ |\ K(e)\ |\ v(K)\ |\ \mathbf{assert}(K)\\
& |\ \mathbf{ref}(K)\ |\ !K\ |\ K \leftarrow e\ |\ v \leftarrow K\ |\ t(K)
\end{array}$$

We assume an infinite set $\mathit{Locs}$ of heap locations.

We define the I/O actions $\alpha \in \mathit{IOActions} ::= t(v, v)$; in $t(v, v')$, we call $v$ the \emph{argument} and $v'$ the \emph{result}. The traces $\tau \in \mathit{Traces} = \mathit{IOActions}^*$ are the lists of I/O actions. We use $\epsilon$ to denote the empty list and ${-} \cdot {-}$ to denote list concatenation.

We define the heaps $h \in \mathit{Heaps} = \mathit{Locs} \rightharpoonup_\mathrm{fin} \mathit{Vals}$ as the finite partial functions from heap locations to values.

We define the configurations $\gamma \in \mathit{Configs} = \mathit{Heaps} \times \mathit{Exprs}$. We define a labeled head reduction relation $\gamma \stackrel{\tau}{\hookrightarrow} \gamma'$, a labeled small-step relation $\gamma \stackrel{\tau}{\rightarrow} \gamma'$, and a labeled reachability relation $\stackrel{\tau}{\rightarrow^*}$ in Figure~\ref{fig:steprules}.

\begin{figure}
\begin{mathpar}
h, \mathbf{cases}(\mathbf{inl}(v), \lambda x.\,e, \_) \stackrel{\epsilon}{\hookrightarrow} h, e[v/x]
\and
h, \mathbf{cases}(\mathbf{inr}(v), \_, \lambda x.\,e) \stackrel{\epsilon}{\hookrightarrow} h, e[v/x]
\and
h, \mathbf{fst}((v, v')) \stackrel{\epsilon}{\hookrightarrow} h, v
\and
h, \mathbf{snd}((v, v')) \stackrel{\epsilon}{\hookrightarrow} h, v'
\and
h, (\lambda x.\,e)(v) \stackrel{\epsilon}{\hookrightarrow} h, e[v/x]
\and
h, \mathbf{assert}(\mathbf{true}) \stackrel{\epsilon}{\hookrightarrow} h, ()
\and
\inferrule{
\ell \notin \mathrm{dom}(h)
}{
h, \mathbf{ref}(v) \stackrel{\epsilon}{\hookrightarrow} h[\ell := v], \ell
}
\and
\inferrule{
\ell \in \mathrm{dom}(h)
}{
h, !\ell \stackrel{\epsilon}{\hookrightarrow} h, h(\ell)
}
\and
\inferrule{
\ell \in \mathrm{dom}(h)
}{
h, \ell \leftarrow v \stackrel{\epsilon}{\hookrightarrow} h[\ell := v], ()
}
\and
h, t(v) \stackrel{t(v, v')}{\hookrightarrow} h, v'
\and
\inferrule{
h, e \stackrel{\tau}{\hookrightarrow} h, e'
}{
h, K[e/\bullet] \stackrel{\tau}{\rightarrow} h, K[e'/\bullet]
}
\and
\gamma \stackrel{\epsilon}{\rightarrow^*} \gamma
\and
\inferrule{
\gamma \stackrel{\tau}{\rightarrow} \gamma'\\
\gamma' \stackrel{\tau'}{\rightarrow^*} \gamma''
}{
\gamma \stackrel{\tau\cdot\tau'}{\rightarrow^*} \gamma''
}
\end{mathpar}
\caption{The labeled head reduction relation $\hookrightarrow$, the labeled small-step relation $\rightarrow$, and the labeled reachability relation $\rightarrow^*$}\label{fig:steprules}
\end{figure}

We say a configuration is finished if its expression is a value: $\mathsf{finished}(h, e) \Leftrightarrow e \in \mathit{Vals}$, and that it has failed if it is not finished and not reducible: $\mathsf{failed}(\gamma) \Leftrightarrow \lnot \mathsf{finished}(\gamma) \land \not\exists \tau, \gamma'.\;\gamma \stackrel{\tau}{\rightarrow} \gamma'$.

\subsection{I/O Specifications}\label{sec:iospecs}

A foundational way of specifying the desired I/O behavior of a program is in the form of a prefix-closed\footnote{A set $T$ is prefix-closed if $\tau\cdot\tau' \in T$ implies $\tau \in T$.} set $T$ of traces. We say a configuration $\gamma$ satisfies such a specification, denoted $\gamma \vDash T$, if for any configuration $\gamma'$ reachable from $\gamma$ via a trace $\tau \in T$ (implying that both the program and the environment behave according to $\tau$), $\gamma'$ has not failed and furthermore for any I/O action $t(v, v')$ that $\gamma'$ can perform, the trace $\tau\cdot t(v, v'')$ is in $T$, for some $v''$:
$$\gamma \vDash T \Leftrightarrow \forall \tau \in T, \gamma'.\;\gamma \stackrel{\tau}{\rightarrow^*} \gamma' \Rightarrow \lnot\mathsf{failed}(\gamma') \land \forall t, v, v', \gamma''.\;\gamma' \stackrel{t(v,v')}{\rightarrow} \gamma'' \Rightarrow \exists v''.\;\tau\cdot t(v, v'') \in T$$

For example ($\tau \preceq \tau'$ denotes that $\tau$ is a prefix of $\tau'$: $\exists \tau''.\;\tau' = \tau\cdot\tau''$):
$$\mathsf{putbool}(\lnot \mathsf{getbool}(())) \vDash \{\tau\ |\ \tau \preceq \mathsf{getbool}((), b)\cdot\mathsf{putbool}(\lnot b, ()), b \in \{\mathbf{true}, \mathbf{false}\}\}$$
where in the program $\lnot e = \mathbf{if}\ e\ \mathbf{then}\ \mathbf{false}\ \mathbf{else}\ \mathbf{true}$.
This specification constrains both the program and the environment: it specifies that $\mathsf{getbool}$ shall return only booleans, and that the program's first action, if any, shall be to get a boolean, and its second action, if any, shall be to put its negation, and that it shall not perform any further actions. The program is allowed to get stuck (and it generally does) if $\mathsf{getbool}$ returns something other than a boolean.

In this paper, we focus on safety properties only; we do not consider verifying termination or liveness properties. Still, we may wish to express that a program $e$ satisfies specification $T$ and that furthermore, if it terminates, it shall have performed a trace from set $T'_v$, where $v$ is the program's result. We can encode this by extending the set of I/O tags with an $\mathsf{exit}$ tag and specifying that $\mathsf{exit}(e) \vDash T \cup \{\tau\ |\ \tau \preceq \tau'\cdot\mathsf{exit}(v), \tau' \in T'_v\}$.

\subsection{An Unlabeled Semantics}\label{sec:monitoring}

Most modular program verification approaches proposed in the literature assume an unlabeled operational semantics and simply verify that the program does not reach a failed configuration. Fortunately, we can encode satisfaction of an I/O specification into a statement of this form by using a \emph{monitoring} semantics, defined by an unlabeled small-step relation over \emph{instrumented configurations} which include an I/O specification in the form of a prefix-closed set of traces.

For the example programming language, in the monitoring semantics, the step rule for I/O expressions is as follows:
$$\inferrule{
t(v,v') \in T
}{
T, h, t(v) \hookrightarrow \{\tau\ |\ t(v, v')\cdot\tau \in T\}, h, v'
}$$
The other step rules do not affect, and are not affected by, the I/O specification.

\begin{lemma}
If $\tau \in T$ and $h, e \stackrel{\tau}{\rightarrow^*} h', e'$ then $(T, h, e) \rightarrow^* (\{\tau'\ |\ \tau\cdot\tau' \in T\}, h', e')$.
\end{lemma}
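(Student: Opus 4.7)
The plan is to proceed by induction on the derivation of the labeled reachability relation $h, e \stackrel{\tau}{\rightarrow^*} h', e'$ given in Figure~\ref{fig:steprules}, which is built from a reflexive base rule and a rule that appends a single labeled small-step on the left.

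For the base case, $\tau = \epsilon$, $h' = h$, and $e' = e$. Since $\epsilon \cdot \tau' = \tau'$, the target set $\{\tau'\mid \epsilon\cdot\tau'\in T\}$ is just $T$, so $(T,h,e) \rightarrow^* (T,h,e)$ holds by reflexivity.

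For the inductive case, we have $h, e \stackrel{\tau_0}{\rightarrow} h_1, e_1 \stackrel{\tau_1}{\rightarrow^*} h', e'$ with $\tau = \tau_0\cdot\tau_1 \in T$. I case-split on $\tau_0$. If $\tau_0 = \epsilon$, then the small-step comes from a non-I/O head reduction lifted through an evaluation context; the monitoring semantics treats all such reductions identically to Figure~\ref{fig:steprules} while leaving $T$ unchanged, so $(T,h,e) \rightarrow (T,h_1,e_1)$, and since $\tau_1 = \tau \in T$ the induction hypothesis yields $(T,h_1,e_1) \rightarrow^*(\{\tau'\mid \tau_1\cdot\tau'\in T\},h',e')$, which is the desired set. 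If $\tau_0 = t(v,v')$, then $\tau_0 \in T$ by prefix-closedness of $T$ (this is the essential use of that assumption), so the I/O step rule of the monitoring semantics fires, giving $(T,h,e) \rightarrow (T_1,h_1,e_1)$ with $T_1 = \{\sigma\mid t(v,v')\cdot\sigma\in T\}$. The set $T_1$ is again prefix-closed, and $\tau_1 \in T_1$ because $t(v,v')\cdot\tau_1 = \tau \in T$, so the induction hypothesis gives $(T_1,h_1,e_1)\rightarrow^*(\{\tau'\mid \tau_1\cdot\tau'\in T_1\},h',e')$. The straightforward set-equality $\{\tau'\mid \tau_1\cdot\tau'\in T_1\} = \{\tau'\mid t(v,v')\cdot\tau_1\cdot\tau'\in T\} = \{\tau'\mid \tau\cdot\tau'\in T\}$ closes the case.

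There is no real obstacle here: the argument is essentially bookkeeping of how the specification component of the instrumented configuration shrinks in lockstep with the trace consumed by the labeled semantics. The only subtlety worth flagging is the implicit appeal to prefix-closedness of $T$ in the I/O case, which is precisely the standing assumption on I/O specifications stated at the beginning of \S\ref{sec:iospecs}; prefix-closedness of the residual $T_1$ follows immediately from that of $T$ and is what allows the induction to go through.
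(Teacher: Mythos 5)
The paper states this lemma without proof, and your argument is exactly the standard one it leaves implicit: induction on the derivation of $\stackrel{\tau}{\rightarrow^*}$, with a case split on whether the first small step is silent or an I/O action, using prefix-closedness of $T$ to discharge the premise $t(v,v')\in T$ of the monitoring step rule. The proof is correct, including the residual-set bookkeeping and the observation that prefix-closedness is preserved so the induction hypothesis applies.
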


We say a configuration $\gamma$ is safe, denoted $\mathsf{safe}(\gamma)$, if no failed configuration is reachable from it.

\begin{theorem}
If $\mathsf{safe}((T, h, e))$ then $h, e \vDash T$.
\end{theorem}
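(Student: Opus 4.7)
The plan is to unfold the definition of $\vDash$ and use the preceding lemma to transport each hypothesised labeled reduction into the monitoring semantics, where safety then gives us what we need. Assume $\mathsf{safe}((T, h, e))$, fix $\tau \in T$ and $h', e'$ with $h, e \stackrel{\tau}{\rightarrow^*} h', e'$. By the lemma, $(T, h, e) \rightarrow^* (T', h', e')$ with $T' = \{\tau' \mid \tau \cdot \tau' \in T\}$, so $(T', h', e')$ is reachable from $(T, h, e)$ in the monitoring semantics and is therefore not a failed monitoring configuration.

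To discharge $\lnot \mathsf{failed}(h', e')$, I would distinguish two cases. If $e' \in \mathit{Vals}$ then $h', e'$ is finished and hence not failed. Otherwise $(T', h', e')$ is not finished, so its safety forces it to take some monitoring step; every monitoring step is obtained from an underlying head reduction inside an evaluation context (with label $\epsilon$ for the non-I/O rules and with label $t(v, v'')$ for the I/O rule), so the same decomposition supplies a labeled step from $h', e'$.

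For the second conjunct, assume $h', e' \stackrel{t(v, v')}{\rightarrow} h'', e''$. The only head rule producing a label of shape $t(v, v')$ is the primitive I/O rule, so $e'$ decomposes as $K[t(v)]$ for some evaluation context $K$, with $h'' = h'$ and $e'' = K[v']$. In particular $e'$ is not a value, so safety forces $(T', h', e')$ to take some monitoring step. Appealing to the determinism of evaluation-context decomposition for this call-by-value grammar, the head redex of $e'$ is precisely $t(v)$, so the only monitoring step available is an instance of the I/O rule, whose side condition yields some $v''$ with $t(v, v'') \in T'$; unfolding $T'$ gives $\tau \cdot t(v, v'') \in T$, as required.

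The main obstacle I anticipate is this last step: from safety I only get that \emph{some} monitoring step exists, and I then have to identify it with the particular step matching the labeled I/O redex rather than some other residual redex inside $K$. This is discharged by the standard determinism of evaluation-context decomposition for the given grammar of $K$, which is a routine but genuine structural induction on $e'$ that I would state and prove as a separate auxiliary lemma before invoking it here.
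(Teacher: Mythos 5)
The paper states this theorem without an explicit proof, but the lemma placed immediately before it makes the intended route clear, and your argument follows exactly that route: transport the labeled reduction into the monitoring semantics via the lemma, read off $\lnot\mathsf{failed}$ from safety, and use unique evaluation-context decomposition to force the available monitoring step at $K[t(v)]$ to be the I/O rule, whose premise supplies the required $v''$ with $t(v,v'')\in\{\tau'\mid\tau\cdot\tau'\in T\}$. The argument is correct, and the one auxiliary fact you flag (unique decomposition for this call-by-value grammar) is indeed the only routine lemma that needs to be discharged separately.
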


\section{Recap of the Petri net approach}\label{sec:petrinets}

In this section, we recall the I/O specification approach presented by \citeN{io}.\footnote{Our presentation differs in unimportant ways from that of \citeN{io}.} In subsequent sections, we propose a number of refinements to this approach, to achieve more abstract I/O specifications.

\subsection{Petri nets for I/O specification}

A Petri net is defined by a set of \emph{places} $\mathcal{P}$, ranged over by $p$ and $q$, and a set $N$ of \emph{transitions}. A \emph{marking} $V \in \mathit{Markings} = \mathcal{P} \rightarrow \mathbb{N}$ of a Petri net maps each place to the number of \emph{tokens} present at that place. Given a marking, a transition can \emph{fire} if there is a token at each of its \emph{pre-places}. Firing the transition removes one token from each of the transition's pre-places and adds one token to each of its \emph{post-places}.

We use the following notation for markings: $V \uplus V' = \lambda p.\,V(p) + V'(p)$; $\mathbf{0} = \lambda p.\,0$; $\llbrace p\rrbrace = \mathbf{0}[p:=1]$; $\llbrace p, q\rrbrace = \llbrace p\rrbrace \uplus \llbrace q\rrbrace$.

Petri nets can be used to denote I/O specifications by labeling some transitions with I/O actions. In particular, we will use Petri nets whose transitions $\nu$ are of the following form:
$$\nu \in \mathcal{N} ::= t(p, v, v, q)\ |\ \mathbf{split}(p, q, q')\ |\ \mathbf{join}(p, p', q)\ |\ \mathbf{noop}(p, q)$$
where $p, p' \in \mathcal{P}$ are the pre-places and $q, q' \in \mathcal{P}$ are the post-places.

A Petri net, given by its set $N \subseteq \mathcal{N}$ of transitions, defines a labeled step relation $\rightarrow$ and a corresponding labeled reachability relation $\rightarrow^*$ on markings:
\begin{mathpar}
\inferrule{
t(p, v, v', q) \in N
}{
V \uplus \llbrace p\rrbrace \stackrel{t(v, v')}{\rightarrow} V \uplus \llbrace q\rrbrace
}
\and
\inferrule{
\mathbf{split}(p, q, q') \in N
}{
V \uplus \llbrace p\rrbrace \stackrel{\epsilon}{\rightarrow} V \uplus \llbrace q, q'\rrbrace
}
\and
\inferrule{
\mathbf{join}(p, p', q) \in N
}{
V \uplus \llbrace p, p'\rrbrace \stackrel{\epsilon}{\rightarrow} V \uplus \llbrace q\rrbrace
}
\and
\inferrule{
\mathbf{noop}(p, q) \in N
}{
V \uplus \llbrace p\rrbrace \stackrel{\epsilon}{\rightarrow} V \uplus \llbrace q\rrbrace
}
\end{mathpar}

We define $\mathsf{Traces}_N(V) = \{\tau\ |\ \exists V'.\;V \stackrel{\tau}{\rightarrow^*} V'\}$. Notice that this set is always prefix-closed.

\subsection{A Separation Logic for I/O Verification}\label{sec:hoare-logic}

We can verify that a program satisfies the I/O specification implied by a marking of a Petri net by means of a Hoare logic (more specifically: a separation logic) whose assertions describe a heap and a marking: $\mathit{Asns} = \frak{P}(\mathit{Heaps} \times \mathit{Markings})$.\footnote{$\frak{P}(X)$ denotes the powerset of $X$.}

We define $\ell \mapsto v = \{(h, V)\ |\ \ell \in \mathrm{dom}(h) \land h(\ell) = v\}$ and $\mathbf{token}(p) = \{(h, V)\ |\ V(p) > 0\}$.
We define $P * P' = \{(h'', V \uplus V')\ |\ (h, V) \in P \land (h', V') \in P' \land h'' = h \uplus h'\}$ where $h'' = h \uplus h'$ means $\mathrm{dom}(h) \cap \mathrm{dom}(h') = \emptyset \land \mathrm{Graph}(h'') = \mathrm{Graph}(h) \cup \mathrm{Graph}(h')$.

We define $T, h \vDash P \Leftrightarrow \exists V.\;\mathsf{Traces}_N(V) \subseteq T \land (h, V) \in P$.

We define the meaning of correctness judgments:
$$\{P\}\ e\ \{Q\} \Leftrightarrow \forall T \in \mathit{ResDet}, h.\;T, h \vDash P \Rightarrow \mathsf{safe}(T, h, e, Q)$$
where
$$\mathit{ResDet} = \{T\ |\ \forall \tau, t, v, v_1, v_2.\; \tau\cdot t(v, v_1) \in T \land \tau\cdot t(v, v_2) \in T \Rightarrow v_1 = v_2\}$$
and
$$\mathsf{safe}(T, h, e, Q) \Leftrightarrow \mathsf{safe}((T, h, e)) \land \forall T', h', v.\;(T, h, e) \rightarrow^*(T', h', v) \Rightarrow T', h' \vDash Q(v)$$
and postconditions $Q : \mathit{Vals} \rightarrow \mathit{Asns}$ are functions from values to assertions. We lift operations on assertions pointwise to operations on postconditions. Also, we usually write postconditions using the notation $P(\mathsf{res})$ instead of $\lambda v.\;P(v)$, where $\mathsf{res}$ stands for \emph{result}.

The logic supports only \emph{result-deterministic} I/O specifications, i.e.~ones that do not underspecify the results of I/O actions. However, this is not a significant restriction, since any I/O specification T can be written as the union of a set $\mathcal{T}$ of result-deterministic I/O specifications, and we have the property $(\forall T \in \mathcal{T}.\;\gamma \vDash T) \Rightarrow \gamma \vDash \bigcup \mathcal{T}$. For each $T \in \mathcal{T}$, $\gamma \vDash T$ can be verified using the Hoare logic.

We say an assertion $P$ \emph{precedes} an assertion $P'$, denoted $P \sqsubseteq P'$, if $\forall (h, V) \in P.\;\exists V'.\;
\mathsf{Traces}_N(V') \subseteq \mathsf{Traces}_N(V) \land (h, V') \in P'$.

From these definitions, we can derive the proof rules shown in Figure~\ref{fig:hoare-logic}.

\begin{figure}
\begin{mathpar}
\{Q(v)\}\ v\ \{Q\}
\and
\inferrule{
\{P\}\ e[v/x]\ \{Q\}
}{
\{P\}\ \mathbf{cases}(\mathbf{inl}(v), \lambda x.\;e, \_)\ \{Q\}
}
\and
\inferrule{
\{P\}\ e[v/x]\ \{Q\}
}{
\{P\}\ \mathbf{cases}(\mathbf{inr}(v), \_, \lambda x.\;e)\ \{Q\}
}
\and
\inferrule{
\{P\}\ v_1\ \{Q\}
}{
\{P\}\ \mathbf{fst}((v_1, v_2))\ \{Q\}
}
\and
\inferrule{
\{P\}\ v_2\ \{Q\}
}{
\{P\}\ \mathbf{snd}((v_1, v_2))\ \{Q\}
}
\and
\inferrule{
\{P\}\ e[v/x]\ \{Q\}
}{
\{P\}\ (\lambda x.\,e)(v)\ \{Q\}
}
\and
\{\mathsf{True}\}\ \mathbf{ref}(v)\ \{\mathsf{res} \mapsto v\}
\and
\{\ell \mapsto v\}\ !\ell\ \{\ell \mapsto v \land \mathsf{res} = v\}
\and
\{\ell \mapsto \_\}\ \ell \leftarrow v\ \{\ell \mapsto v\}
\and
\{\mathbf{token}(p) \land t(p, v, v', q) \in N\}\ t(v)\ \{\mathbf{token}(q) \land \mathsf{res} = v'\}
\and
\inferrule{
\{P\}\ e\ \{Q\}\\
\forall v.\;\{Q(v)\}\ K[v/\bullet]\ \{Q'\}
}{
\{P\}\ K[e/\bullet]\ \{Q'\}
}
\and
\inferrule{
P \sqsubseteq P'\\
\{P'\}\ e\ \{Q\}\\
Q \sqsubseteq Q'
}{
\{P\}\ e\ \{Q'\}
}
\and
\inferrule{
\{P\}\ e\ \{Q\}
}{
\{P * R\}\ e\ \{Q * R\}
}
\and
\inferrule{
\forall i \in I.\;\{P_i\}\ e\ \{Q\}
}{
\{\bigcup_i P_i\}\ e\ \{Q\}
}
\and
\inferrule{
P \subseteq P'
}{
P \sqsubseteq P'
}
\and
\inferrule{
P \sqsubseteq P'
}{
P * R \sqsubseteq P' * R
}
\and
\inferrule{
\mathbf{split}(p, q, q') \in N
}{
\mathbf{token}(p) \sqsubseteq \mathbf{token}(q) * \mathbf{token}(q')
}
\and
\inferrule{
\mathbf{join}(p, p', q) \in N
}{
\mathbf{token}(p) * \mathbf{token}(p') \sqsubseteq \mathbf{token}(q)
}
\and
\inferrule{
\mathbf{noop}(p, q) \in N
}{
\mathbf{token}(p) \sqsubseteq \mathbf{token}(q)
}
\end{mathpar}
\caption{Proof rules of the separation logic for I/O verification}\label{fig:hoare-logic}
\end{figure}

\subsection{Examples}\label{sec:examples}

The following diagram denotes a Petri net with places $\mathcal{P} = \{\place{p_1}, \place{p_2}, \place{p_3}\}$, marking $V = \llbrace \place{p_1}\rrbrace$, and transitions $N = \{\action{putchar}(\place{p_1}, \texttt{'h'}, (), \place{p_2}), \action{putchar}(\place{p_2}, \texttt{'i'}, (), \place{p_3})\}$, where $\action{putchar} \in \mathit{IOTags}$ is an I/O tag:
\begin{center}
\begin{tikzpicture}[node distance=0.4cm and 1.5cm]
  \node [place,tokens=1,label=above:$\place{p_1}$] (p1) {};
  \node [transitionH] (pc1) [right=of p1, label=above:{$\action{putchar}(\texttt{'h'}, ())$}] {}
    edge [pre] (p1);
  \node [place] (p2) [right=of pc1, label=above:$\place{p2}$] {}
    edge [pre] (pc1);
  \node [transitionH] (pc2) [right=of p2, label=above:{$\action{putchar}(\texttt{'i'}, ())$}] {}
    edge [pre] (p2);
  \node [place] (p3) [right=of pc2, label=above:$\place{p3}$] {}
    edge [pre] (pc2);
\end{tikzpicture}
\end{center}
When used as an I/O specification, it allows the program to perform the I/O actions $\action{putchar}(\texttt{'h'}, ())$ and $\action{putchar}(\texttt{'i'}, ())$, once, in that order. Indeed, given the marking shown (one token in place $\place{p1}$ and zero tokens in places $\place{p2}$ and $\place{p3}$), the transition labeled $\action{putchar}(\texttt{'h'}, ())$ can \emph{fire} (because all of its \emph{pre-places} have a token), which removes one token from each of the transition's pre-places and adds one to each of its \emph{post-places}. (No other transition can fire initially.) If it does, in the resulting marking, only the other transition can fire, etc.

Per the Hoare logic presented above, for any places $p, q \in \mathcal{P}$ we have the following Hoare triple for the I/O expression $\mathsf{putchar}(c)$, where $c$ is a character:
$$\annot{\{\mathbf{token}(p) \land \mathsf{putchar\_}(p, c, (), q)\}}\ \mathsf{putchar}(c)\label{def:putchar}\ \annot{\{\mathbf{token}(q)\}}$$
where we use the shorthand $t\_(p, v, v', q)$ for $t(p, v, v', q) \in N$. In the remainder, we will abbreviate the action $\mathsf{putchar}(v, ())$ to $\mathsf{putchar}(v)$ and the transition $\mathsf{putchar}(p, v, (), q)$ to $\mathsf{putchar}(p, v, q)$.

Assuming the Petri net above, the following Hoare triple expresses that a function $\mathsf{main}$ satisfies the I/O specification denoted by the Petri net:\footnote{We use notation $f()$ to abbreviate function application $f(())$.}
\begin{equation}\label{eqn:hi-spec-concrete}
\annot{\{\mathbf{token}(\place{p_1})\}}\ \mathsf{main}()\ \annot{\{\mathbf{token}(\place{p_3})\}}
\end{equation}
However, when specifying functions, it is preferable that the Hoare triple itself express any necessary assumptions about the Petri net, like so:
$$\forall \mathcal{P}, N, p_1, p_2, p_3.\;\annot{\left\{\begin{array}{@{} l @{}}
\mathbf{token}(p_1) \land \mathsf{putchar\_}(p_1, \texttt{'h'}, p_2)\\
{} \land \mathsf{putchar\_}(p_2, \texttt{'i'}, p_3)
\end{array}\right\}}\ \mathsf{main}()\ \annot{\{\mathbf{token}(p_3)\}}$$
This specification universally quantifies over the set of places $\mathcal{P}$, the set of transitions $N$, and the places $p_1$, $p_2$, and $p_3$. It is easy to see that this specification is indeed equivalent to specification \ref{eqn:hi-spec-concrete} above, in terms of the I/O traces $\mathsf{main}$ is allowed to produce. In the remainder, we will always implicitly universally quantify over the set of places, the set of transitions, and any free metavariables (including ones ranging over places) of a specification.

Consider now the following implementation of function $\mathsf{main}$:\footnote{We use notation $\mathbf{function}\ f()\ \{\ e\ \}$ to mean $f = \lambda x.\;e$, where $x$ does not appear in $e$. Similarly, we use $\mathbf{function}\ f(x)\ \{\ e\ \}$ to mean $f = \lambda x.\;e$ and $\mathbf{function}\ f(x, y)\ \{\ e\ \}$ to mean $f = \lambda \mathit{args}.\;e[\mathbf{fst}(\mathit{args})/x, \mathbf{snd}(\mathit{args})/y]$.}
$$\mathbf{function}\ \mathsf{main}()\ \{\ \mathsf{putchar}(\texttt{'h'}); \mathsf{putchar}(\texttt{'i'})\ \}$$
We can verify that this function satisfies its specification using the Hoare rules from Figure~\ref{fig:hoare-logic}. Such a proof is commonly summarized as a \emph{Hoare proof outline} that mentions the most salient intermediate assertions, like so:
$$\begin{array}{l}
\mathbf{function}\ \mathsf{main}()\ \{\\
\quad \annot{\{\mathbf{token}(p_1) \land \mathsf{putchar\_}(p_1, \texttt{'h'}, p_2) \land \mathsf{putchar\_}(p_2, \texttt{'i'}, p_3)\}}\\
\quad \mathsf{putchar}(\texttt{'h'});\\
\quad \annot{\{\mathbf{token}(p_2) \land \mathsf{putchar\_}(p_2, \texttt{'i'}, p_3)\}}\\
\quad \mathsf{putchar}(\texttt{'i'})\\
\quad \annot{\{\mathbf{token}(p_3)\}}\\
\}
\end{array}$$

\subsubsection{Underspecification}

One can easily express specifications that allow multiple behaviors, by specifying a Petri net where there are multiple paths between the start and destination places. For example:
\begin{center}
\begin{tikzpicture}[node distance=0.1cm and 1cm]
  \node [place,tokens=1,label=above:$\mathrm{p_1}$] (t1) {};
  \node [right=of t1] (ts) {\rvdots};
  \node [transitionH] (ta) [above=of ts, label=above:$\action{putchar}(\texttt{'a'})$] {}
    edge [pre] (t1);
  \node [transitionH] (tz) [below=of ts, label=below:$\action{putchar}(\texttt{'z'})$] {}
    edge [pre] (t1);
  \node [place,right=of ts,label=above:$\mathrm{p_2}$] (t2) {}
    edge [pre] (ta)
    edge [pre] (tz);
\end{tikzpicture}%
\end{center}
The corresponding Hoare triple, along with one example implementation that satisfies it, is as follows:
$$\begin{array}{l}
\mathbf{function}\ \mathsf{put\_some\_char}()\ \{\\
\quad \annot{\{\mathbf{token}(p_1) \land \forall c \in \{\mathtt{\mrq a\mrq}, \dots, \mathtt{\mrq z\mrq}\}.\; \mathsf{putchar\_}(p_1, c, p_2)\}}\\
\quad \mathsf{putchar}(\texttt{'a'})\\
\quad \annot{\{\mathbf{token}(p_2)\}}\\
\}
\end{array}$$

\subsubsection{Compositionality}\label{sec:composite}

The approach allows one to define \emph{composite I/O actions} on top of primitive ones, such that client code need not be aware of whether a given action is primitive or not. For example, we can define a composite I/O action $\mathsf{putchars}$ as follows:\footnote{We use $\mathbf{match}$ notation to abbreviate the corresponding combinations of $\mathbf{cases}$, $\mathbf{fst}$, and $\mathbf{snd}$.}
$$\begin{array}{l}
\mathsf{putchars\_}(p_1, \epsilon, p_2) = \mathbf{noop\_}(p_1, p_2)\\
\mathsf{putchars\_}(p_1, c\cdot\overline{c}, p_3) =\\
\quad \exists p_2.\;\mathsf{putchar\_}(p_1, c, p_2) \land \mathsf{putchars\_}(p_2, \overline{c}, p_3)\\
\\
\mathbf{function}\ \mathsf{putchars}(\overline{c})\ \{\\
\quad \annot{\{\mathbf{token}(p) \land \mathsf{putchars\_}(p, \overline{c}, q)\}}\\
\quad \mathbf{match}\ \overline{c}\ \mathbf{with}\\
\quad |\ \epsilon \Rightarrow ()\\
\quad |\ c\cdot\overline{c}' \Rightarrow \mathsf{putchar}(c); \mathsf{putchars}(\overline{c}')\\
\quad \annot{\{\mathbf{token}(q)\}}\\
\}
\end{array}$$
Notice that if $\overline{c} = \epsilon$, then $\mathbf{token}(p) \sqsubseteq \mathbf{token}(q)$. Notice also that verification of client code can proceed as if $\mathsf{putchars}$ were a primitive I/O tag instead of a function, and $\mathsf{putchars}(\overline{c})$ were a primitive I/O expression instead of a function application, and $\mathsf{putchars\_}$ referred directly to a transition of the Petri net instead of being a predicate defined on top of it.

\subsubsection{Input}

The approach allows one to express input-dependent output requirements, as well as assumptions about the input that will be received. For example, assume $\mathsf{getchar} \in \mathit{IOTags}$. Then for all characters $c$ we have the following Hoare triple:\footnote{Notation $\mathsf{getchar}(p, c, q)$ abbreviates transition $\mathsf{getchar}(p, (), c, q)$.}
$$\annot{\{\mathbf{token}(p) \land \mathsf{getchar\_}(p, c, q)\}}\label{def:getchar}\ \mathsf{getchar}()\ \annot{\{\mathsf{res} = c \land \mathbf{token}(q)\}}$$

The following specification expresses that function $\mathsf{toUpper}$ shall perform a $\mathsf{getchar}$ action, that this action's result shall be a lowercase letter (a constraint on the environment), and that the program shall subsequently output the uppercase version of that letter:
$$\begin{array}{l}
\mathbf{function}\ \mathsf{toUpper}()\ \{\\
\quad \annot{\forall \mathcal{P}, N, p_1, c, p_2, p_3.}\\
\quad \annot{\left\{\begin{array}{@{} l @{}}
\mathbf{token}(p_1) \land \mathsf{getchar\_}(p_1, c, p_2) \land \mathsf{putchar\_}(p_2, c - \texttt{'a'} + \texttt{'A'}, p_3)\\
{} \land \texttt{'a'} \le c \le \texttt{'z'}
\end{array}\right\}}\\
\quad \mathbf{let}\ \mathsf{ch} := \mathsf{getchar}()\ \mathbf{in}\ \mathsf{putchar}(\mathsf{ch} - \texttt{'a'} + \texttt{'A'})\\
\quad \annot{\{\mathbf{token}(p_3)\}}\\
\}
\end{array}$$
(For clarity, we here show the universal quantifications that we will usually leave implicit.) Any particular result-deterministic Petri net that satisfies the precondition has only one $\mathsf{getchar}$ transition starting in $p_1$. However, since the specification is universally quantified over all such Petri nets, it implies that the program properly handles all 26 letters.\footnote{The restriction to result-deterministic Petri nets is implied by the semantics of Hoare triples given in \S\ref{sec:hoare-logic}.}

\subsubsection{Specification-level concurrency}

Suppose we want the program to read two characters and print them back to us. We do not want to force the program to print the first character before it reads the second character. We even want to allow the program to read the second character while, concurrently, it is printing the first character.\footnote{We treat concurrency in the program formally in \S\ref{sec:concur}.} A Petri net that expresses this specification is as follows:

\begin{center}
\begin{tikzpicture}[node distance=0.1cm and 1cm]
  \node [place,tokens=1,label=above:$\mathrm{p_1}$] (t1) {};
  \node [transitionH] (asplit) [right=of t1, label=above:$\cmd{split}$] {}
    edge [pre] (t1);
  % -> option a
  \node [place] (ta1) [above right=of asplit, label=above:$\mathrm{p_2}$] {}
    edge [pre] (asplit);
  \node [transitionH] (abeep) [right=of ta1, label=above:$\action{getchar}(c_1)$] {}
    edge [pre] (ta1);
  \node [place] (ta2) [right=of abeep, label=above:$\mathrm{p_3}$] {}
    edge [pre] (abeep);
  \node [transitionH] (aa) [right=of ta2, label=above:{$\action{getchar}(c_2)$}] {}
    edge [pre] (ta2);
  \node [place] (ta3) [right=of aa, label=above:$\mathrm{p_4}$] {}
    edge [pre] (aa);
    
  % -> option b
  \node [place] (tb1) [below right=of asplit, label=below:$\mathrm{p_5}$] {}
    edge [pre] (asplit);
  \node [transitionH] (ab) [right=of tb1, label=below:{$\action{putchar}(c_1)$}] {}
    edge [pre] (tb1);
  \node [place] (tb2) [right=of ab, label=below:$\mathrm{p_6}$] {}
    edge [pre] (ab);
  \node [transitionH] (ac) [right=of tb2, label=below:{$\action{putchar}(c_2)$}] {}
    edge [pre] (tb2);
  \node [place] (tb3) [right=of ac, label=below:$\mathrm{p_7}$] {}
    edge [pre] (ac);
  
  % -> join
  \node [transitionH] (ajoin) [below right=of ta3, label=above:$\cmd{join}$] {}
    edge [pre] (ta3)
    edge [pre] (tb3);
  \node [place] (t2) [right=of ajoin, label=above:$\mathrm{p_8}$] {}
    edge [pre] (ajoin);
\end{tikzpicture}%
\end{center}
When the $\mathbf{split}$ transition fires, it consumes the token at $\place{p_1}$ and produces two tokens: one at $\place{p_2}$ and another one at $\place{p_5}$. (Notice that this specification even allows the program to print the first character (and the second character!) while reading the first character, but of course, that is not physically possible.)

The corresponding separation logic specification, with a matching proof outline, is as follows:
$$\begin{array}{l}
\mathbf{function}\ \mathsf{cat2}()\ \{\\
\quad \annot{\left\{\begin{array}{l}
\mathbf{token}(p_1) \land \mathbf{split\_}(p_1, p_2, p_5) \land \mathsf{getchar\_}(p_2, c_1, p_3) \land \mathsf{getchar\_}(p_3, c_2, p_4)\\
{} \land \mathsf{putchar\_}(p_5, c_1, p_6) \land \mathsf{putchar\_}(p_6, c_2, p_7) \land \mathbf{join\_}(p_4, p_7, p_8)
\end{array}\right\}}\\
\quad \codecomment{//\ \mathbf{token}(p_1) \sqsubseteq \mathbf{token}(p_2) * \mathbf{token}(p_5)}\\
\quad \annot{\{\mathbf{token}(p_2) * \mathbf{token}(p_5)\}}\\
\quad \mathbf{let}\ \mathsf{ch1} := \mathsf{getchar}()\ \mathbf{in}\ \mathbf{let}\ \mathsf{ch2} := \mathsf{getchar}()\ \mathbf{in}\\
\quad \annot{\{\mathbf{token}(p_4) * \mathbf{token}(p_5) \land \mathsf{ch1} = c_1 \land \mathsf{ch2} = c_2\}}\\
\quad \mathsf{putchar}(\mathsf{ch1}); \mathsf{putchar}(\mathsf{ch2})\\
\quad \codecomment{//\ \mathbf{token}(p_4) * \mathbf{token}(p_7) \sqsubseteq \mathbf{token}(p_8)}\\
\quad \annot{\{\mathbf{token}(p_8)\}}\\
\}
\end{array}$$

A somewhat more realistic version of $\mathsf{cat}$ is one that reads characters forever and prints them back at its leasure:
$$\annot{\{\mathbf{token}(p_1) * \mathbf{token}(p_2) \land \mathsf{getchars\_}(p_1, \vec{c}) \land \mathsf{putchars\_}(p_2, \vec{c})\}}\ \mathsf{cat}()\ \annot{\{\mathsf{False}\}}$$
where
$$\begin{array}{l}
\mathsf{getchars\_}(p_1, c\cdot\vec{c}) = \exists p_2.\; \mathsf{getchar\_}(p_1, c, p_2) \land \mathsf{getchars\_}(p_2, \vec{c})\\
\mathsf{putchars\_}(p_1, c\cdot\vec{c}) = \exists p_2.\; \mathsf{putchar\_}(p_1, c, p_2) \land \mathsf{putchars\_}(p_2, \vec{c})
\end{array}$$
Here, we intend the weakest solution of these equations; they describe an infinite Petri net. $\vec{c}$ denotes an infinite sequence of characters.

\section{Assertions as places}\label{sec:buffered-output}

\subsection{Motivating example: buffered output}

On Unix-like systems, $\mathsf{putchar}$ is a C run-time library function that is implemented in terms of the $\mathsf{write}$ system call, which writes a sequence of characters:
$$\begin{array}{l}
\mathsf{write\_}(p_1, \epsilon, p_2) = (p_2 = p_1)\\
\mathsf{write\_}(p_1, c\cdot\overline{c}, p_3) = \exists p_2.\;\mathsf{write\_char\_}(p_1, c, p_2) \land \mathsf{write\_}(p_2, \overline{c}, p_3)
\end{array}$$
$$\annot{\{\mathbf{token}(p_1) \land \mathsf{write\_}(p_1, \overline{c}, p_2)\}}\ \mathsf{write}(\overline{c})\ \annot{\{\mathbf{token}(p_2)\}}$$
Note: we assume that the effect of $\mathsf{write}(\overline{c}\cdot\overline{c}')$ is indistinguisable from that of $\mathsf{write}(\overline{c}); \mathsf{write}(\overline{c}')$; to model this, we take $\mathsf{write\_char}$ as a primitive I/O tag and we define $\mathsf{write\_}$ in terms of it. Accordingly, in the formal setting we assume $\mathsf{write}$ is some function implemented in terms of $\mathsf{write\_char}(c)$ primitive I/O expressions.

Since system calls are expensive, $\mathsf{putchar}$ buffers output in a global variable:
$$\begin{array}{l}
\mathbf{function}\ \mathsf{putchar}(c)\ \{\\
\quad \annot{\{\mathsf{buffer} \mapsto \overline{c} * \mathbf{token}(p_1) \land \mathsf{write\_}(p_1, \overline{c}, p_2) \land \mathsf{write\_char\_}(p_2, c, p_3)\}}\\
\quad \mathbf{if}\ |!\mathsf{buffer}| = 1000\ \mathbf{then}\ \{\\
\quad\quad \mathsf{write}(!\mathsf{buffer}); \mathsf{buffer} \leftarrow \epsilon\\
\quad \}\\
\quad \mathsf{buffer} \leftarrow !\mathsf{buffer} \cdot c\\
\quad \annot{\{\exists \overline{c}', p_1'.\;\mathsf{buffer} \mapsto \overline{c}' * \mathbf{token}(p_1') \land \mathsf{write\_}(p_1', \overline{c}', p_3)\}}\\
\}
\end{array}$$
The specification for $\mathsf{putchar}$ that we show here does not hide the details of how it is implemented. For example, it names the global variable $\mathsf{buffer}$. We would like to define predicate $\mathsf{putchar\_}$ such that this implementation satisfies the simple, abstract specification for $\mathsf{putchar}$ that we showed on p.~\pageref{def:putchar}. Notice that this requires that we unify the postcondition above with $\mathbf{token}(q)$, for some place $q$. Clearly, in the approach of \S\ref{sec:petrinets}, that is impossible, since $\mathbf{token}(q)$ constrains only the marking, not the heap.

\subsection{Assertions as places}

What we want is a specification for $\mathsf{putchar}$ that looks exactly like the one on p.~\pageref{def:putchar}, and that therefore allows the client program specifications and proofs shown in \S\ref{sec:examples}, but which at the same time can be unified with the specification for the buffering implementation shown above.

The main contribution of this paper is the observation that we can achieve this by introducing, to complement the existing \emph{primitive} notion of places and the existing \emph{primitive} $\mathbf{token}$ assertion form, an \emph{abstract notion of places}, and an \emph{abstract version of the $\mathbf{token}$ assertion form}, where \emph{places are assertions}, and $\mathbf{token}(p)$ simply means $p$.

We can then unify the \emph{abstract reading} of the specification of $\mathsf{putchar}$ on p.~\pageref{def:putchar} with the one above by defining $\mathsf{putchar\_}$ as follows (where $p_1, p_2 \in \mathit{Asns}$):
$$\begin{array}{l}
\mathsf{putchar\_}(p_1, c, p_2) = \exists p_1', p_2'.\\
\quad p_1 = \mathsf{buffer\_token}(p_1') \land \mathsf{write\_char\_}(p_1', c, p_2') \land p_2 = \mathsf{buffer\_token}(p_2')\\
\textrm{where}\\
\mathsf{buffer\_token}(p) = \exists \overline{c}, p_0.\;\mathsf{buffer} \mapsto \overline{c} * \mathbf{token}(p_0) \land \mathsf{write\_}(p_0, \overline{c}, p)
\end{array}$$
Note: we will continue to write $\mathbf{token}(p)$ instead of just $p$ when we wish to point out that we are applying the Petri net specification style.

Notice that by adopting the abstract reading of the specification of p.~\pageref{def:putchar} as the specification of $\mathsf{putchar}$, we can build proofs of clients of $\mathsf{putchar}$ that are agnostic as to whether $\mathsf{putchar}$ is a primitive I/O tag, a function that performs composite I/O (i.e.~multiple I/O actions, as exemplified by the $\mathsf{putchars}$ example in \S\ref{sec:composite}), or even a function that performs both I/O and heap manipulation, such as the buffering implementation above.

Notice that in case $\mathsf{putchar}$ is a primitive I/O tag, the abstract reading of its specification is satisfied trivially by the primitive reading, if we define $t\_(p, v, v', q)$, where $t$ is a primitive I/O tag and $p$ and $q$ are assertions, to mean $\exists p', q' \in \mathcal{P}.\;p = \mathbf{token}(p') \land t\_(p', v, v', q') \land q = \mathbf{token}(q')$. 

\subsection{Composing I/O specifications}

Consider the following program:
$$\begin{array}{l}
\mathbf{function}\ \mathsf{start}()\ \{\\
\quad \annot{\{\mathbf{token}(p_1) \land \mathsf{beep\_}(p_1, p_2) \land \mathsf{write\_}(p_2, \texttt{'!'}, p_3)\}}\\
\quad \\
\quad \mathbf{let}\ \mathsf{buffer} := \mathbf{ref}\ \epsilon\ \mathbf{in}\\
\quad \mathbf{function}\ \mathsf{putchar}(c)\ \{\ \cdots\ \}\\
\quad \mathbf{function}\ \mathsf{flush}()\ \{\ \cdots\ \}\\
\quad \\
\quad \mathbf{function}\ \mathsf{main}()\ \{\\
\quad \quad \annot{\{\mathbf{token}(q_1) \land \mathsf{beep\_}(q_1, q_2) \land \mathsf{putchar\_}(q_2, \texttt{'!'}, q_3) \land \mathsf{flush\_}(q_3, q_4)\}}\\
\quad \quad \mathsf{beep}();\ \mathsf{putchar}(\texttt{'!'});\ \mathsf{flush}()\\
\quad \quad \annot{\{\mathbf{token}(q_4)\}}\\
\quad \}\\
\quad \mathsf{main}()\\
\quad \\
\quad \annot{\{\mathbf{token}(p_3)\}}\\
\}
\end{array}$$
Function $\mathsf{main}$ is written in terms of the C run-time library functions $\mathsf{putchar}$ and $\mathsf{flush}$, as well as the system call $\mathsf{beep}$. We would like its specification and verification to be independent of the implementation of $\mathsf{putchar}$ and $\mathsf{flush}$ in terms of the buffer. Function $\mathsf{start}$, which first initializes the C run-time library's internal data structures and then calls $\mathsf{main}$, is specified purely in terms of system calls.

Function $\mathsf{main}$'s specification shown above is of the form $\{P(q_1, q_2, q_3, q_4)\}$ $\mathsf{main}()$ $\{Q(q_4)\}$. We can prove easily that (the abstract reading of) this Hoare triple holds for arbitrary values of $q_1$, $q_2$, $q_3$, and $q_4$. This means that when verifying a particular call of $\mathsf{main}$, we can instantiate $\mathsf{main}$'s specification with arbitrary particular values for these variables.

To verify the call of $\mathsf{main}$ in $\mathsf{start}$, we need to prove the following Hoare triple:
$$\begin{array}{l}
\annot{\{\mathsf{buffer}\mapsto\epsilon * \mathbf{token}(p_1) \land \mathsf{beep\_}(p_1, p_2) \land \mathsf{write\_}(p_2, \texttt{'!'}, p_3)\}}\\
\mathsf{main}()\\
\annot{\{\mathbf{token}(p_3)\}}
\end{array}$$
Specifically, we need to find values for $q_1$, $q_2$, $q_3$, and $q_4$ such that the two implications
$$\begin{array}{c}
\mathsf{buffer}\mapsto\epsilon * \mathbf{token}(p_1) \land \mathsf{beep\_}(p_1, p_2) \land \mathsf{write\_}(p_2, \texttt{'!'}, p_3)\\
\Downarrow\\
\mathbf{token}(q_1) \land \mathsf{beep\_}(q_1, q_2) \land \mathsf{putchar\_}(q_2, \texttt{'!'}, q_3) \land \mathsf{flush\_}(q_3, q_4)
\end{array}$$
and
$$\mathbf{token}(q_4) \Rightarrow \mathbf{token}(p_3)$$
hold.

Proving the first implication requires that predicate $\mathsf{beep\_}$ satisfy a \emph{frame property}:
$$\mathsf{beep\_}(p_1, p_2) \Rightarrow \mathsf{beep\_}(p_1 * R, p_2 * R)$$
Therefore, it is useful when applying this approach to introduce a convention to have such \emph{abstract transition predicates} satisfy the frame property, as well as the following \emph{weakening property}, similar to Hoare logic's Rule of Consequence:
$$\inferrule{
p_1' \sqsubseteq p_1\\
p_2 \sqsubseteq p_2'
}{
\mathsf{beep}(p_1, p_2) \Rightarrow \mathsf{beep}(p_1', p_2')
}$$

To comply with this convention, we redefine predicate $\mathsf{putchar\_}$ as follows:
$$\begin{array}{l}
\mathsf{putchar\_}(p_1, c, p_2) = \exists p_1', p_2', R.\\
\quad (p_1 \sqsubseteq \mathsf{buffer\_token}(p_1') * R) \land \mathsf{write\_char\_}(p_1', c, p_2') \land (\mathsf{buffer\_token}(p_2') * R \sqsubseteq p_2)
\end{array}$$
Furthermore, we define $\mathsf{flush\_}$ as follows:
$$\mathsf{flush\_}(p_1, p_2) = \exists p', R.\;
(p_1 \sqsubseteq \mathsf{buffer\_token}(p') * R) \land (\mathsf{buffer}\mapsto \epsilon * p' * R \sqsubseteq p_2)$$

We can then verify the call of $\mathsf{main}$ by taking $q_1 = p_1 * \mathsf{buffer}\mapsto\epsilon$ and $q_2 = p_2 * \mathsf{buffer}\mapsto\epsilon$ and $q_3 = \mathsf{buffer\_token}(p_3)$ and $q_4 = \mathsf{buffer}\mapsto\epsilon * p_3$. Notice that $q_2 \Rightarrow \mathsf{buffer\_token}(p_2)$, as required by $\mathsf{putchar\_}(q_2, \texttt{'!'}, q_3)$.

We have verified this example using Iris \cite{iris-io-2-0}.

\subsection{Rewriting I/O specifications}

We define the abstract reading of the special transition predicates $\mathbf{split\_}$, $\mathbf{join\_}$, and $\mathbf{noop\_}$ as follows:
$$\begin{array}{r @{\quad=\quad} l}
\mathbf{split\_}(p_1, p_2, p_3) & (p_1 \sqsubseteq p_2 * p_3)\\
\mathbf{join\_}(p_1, p_2, p_3) & (p_1 * p_2 \sqsubseteq p_3)\\
\mathbf{noop\_}(p_1, p_2) & (p_1 \sqsubseteq p_2)
\end{array}$$

It follows that in the abstract reading we have many equivalences between I/O specifications which are not available in the primitive reading. For example:
$$(\exists p'.\;\mathbf{split\_}(p, q_1, p') \land \mathbf{split\_}(p', q_2, q_3)) \Leftrightarrow (\exists p'.\ \mathbf{split\_}(p, p', q_3) \land \mathbf{split\_}(p', q_1, q_2))$$
Here, too, the continued use of the $\mathbf{split\_}$, $\mathbf{join\_}$, and $\mathbf{noop\_}$ syntax is a purely stylistic choice.

\subsection{Transition predicates as \emph{abstract nested Hoare triples}}

Readers will have noted the similarity between our abstract transition predicates, such as $\mathsf{putchar\_}(P, c, Q)$, and Hoare triples $\{P\}\ \mathsf{putchar}(c)\ \{Q\}$. Both specify a precondition and a postcondition for an action, and both satisfy the Frame rule and the Rule of Consequence. Both can be used as first-class assertions in modern program logics that support \emph{nested Hoare triples} (e.g.~\cite{nested-hoare-triples} and Iris \cite{iris3}).

However, note also the differences: transition predicates are more abstract, in that they need not correspond to a particular function or program expression. (See e.g.~the $\mathsf{write\_char\_}$ transition predicate introduced at the start of this section to model the effect of the $\mathsf{write}$ system call.) Furthermore, transition predicates can be \emph{folded} and \emph{unfolded} by modules that have access to their definition. Thirdly, whereas for output actions such as $\mathsf{putchar}$ nested Hoare triples could mostly be used instead of transition predicates in our I/O specifications, for input actions such as $\mathsf{getchar}$ this is less straightforward.

\section{Dealing with input using prophecy variables}\label{sec:dealing-with-input}

\subsection{A motivating example}

Consider the following specification for a chat server. For simplicity, we assume there is a single chat room, with exactly two members, with nicknames $n_1$ and $n_2$.

$$\begin{array}{l}
\annot{\left\{\begin{array}{l l}
& (\mathbf{token}(r_1) \land \mathsf{receiveFromNick\_}(r_1, n_1, \mu_1))\\
{}* & (\mathbf{token}(r_2) \land \mathsf{receiveFromNick\_}(r_2, n_2, \mu_2))\\
{}* & (\mathbf{token}(s) \land (\forall \mu \in \mu_1^{n_1}\ ||\ \mu_2^{n_2}.\;\mathsf{sendToNicks\_}(s, \mu)))
\end{array}\right\}}\\
\mathsf{serveChatRoom}()\\
\annot{\{\mathbf{false}\}}\\
\end{array}$$
where
$$\begin{array}{@{} l @{}}
\mathsf{sendToNicks\_}(s, \mu) =\\
\quad \exists s_1, s_2.\;\mathbf{split\_}(s, s_1, s_2) \land \mathsf{sendToNick\_}(s_1, n_1, \mu) \land \mathsf{sendToNick\_}(s_2, n_2, \mu)
\end{array}$$

We use symbol $\mu$ and variants to range over infinite sequences of messages, $m^n$ to denote the quoted message \texttt{$n$ says '$m$'}, lifted also to sequences of messages, and $\mu_1\ ||\ \mu_2$ to denote the set of all interleavings of $\mu_1$ and $\mu_2$.

The specification states that the chat server sends to each member the same sequence of messages $\mu$, which is some interleaving of the sequences of messages received from each member. Notice that in this Petri net, there are many $\mathbf{split}$ transitions outgoing from place $s$, but since there is only a single token in this place, only one of these transitions can fire in any particular execution. This models the fact that the chat server must choose which interleaving of the incoming messages it will send out.

For simplicity, we assume that the following network API is available to the chat server:
$$\begin{array}{l}
\annot{\{\mathbf{token}(r) \land \mathsf{receiveFromNick\_}(r, n, m\cdot\mu)\}}\\
\mathsf{receiveFromNick}(n)\\
\annot{\{\exists r'.\;\mathbf{token}(r') \land \mathsf{receiveFromNick\_}(r', n, \mu) \land \mathsf{res} = m\}}\\
\\
\annot{\{\mathbf{token}(s) \land \mathsf{sendToNick\_}(s, n, m\cdot\mu)\}}\\
\mathsf{sendToNick}(n, m)\\
\annot{\{\exists s'.\;\mathbf{token}(s') \land \mathsf{sendToNick\_}(s', n, \mu)\}}
\end{array}$$

We implement the chat server by forking one thread per member to receive messages from that member and insert them into a shared queue or \emph{channel}, and, in a separate thread, dequeuing messages from the channel and sending them to each member; see Fig.~\ref{fig:chat-impl}.
\begin{figure}
$$\begin{array}{l}
\mathbf{function}\ \mathsf{serveChatRoom}()\ \{\\
\quad \mathbf{let}\ \mathsf{roomChan} := \mathsf{newChannel}()\ \mathbf{in}\\
\quad \mathbf{fork}\ \mathsf{pumpFromNick}(n_1, \mathsf{roomChan});\\
\quad \mathbf{fork}\ \mathsf{pumpFromNick}(n_2, \mathsf{roomChan});\\
\quad \mathsf{pumpRoom}(\mathsf{roomChan})\\
\}\\
\\
\mathbf{function}\ \mathsf{pumpFromNick}(n, \mathsf{roomChan})\ \{\\
\quad \annot{\left\{\begin{array}{l}
(\mathbf{token}(r) \land \mathsf{receiveFromNick\_}(r, n, \mu))\\
{} * (\mathbf{token}(s) \land \mathsf{send\_}(s, \mathsf{roomChan}, \mu^n))
\end{array}\right\}}\\
\quad \mathbf{loop}\ \{\\
\quad\quad \mathbf{let}\ m := \mathsf{receiveFromNick}(n)\ \mathbf{in}\\
\quad\quad \mathsf{send}(\mathsf{roomChan}, m^n)\\
\quad \}\\
\quad \annot{\{\mathbf{false}\}}\\
\}\\
\\
\mathbf{function}\ \mathsf{pumpRoom}(\mathsf{roomChan})\ \{\\
\quad \annot{\left\{\begin{array}{l}
(\mathbf{token}(r) \land \mathsf{receive\_}(r, \mathsf{roomChan}, \mu))\\
{} * (\mathbf{token}(s_1) \land \mathsf{sendToNick\_}(s_1, n_1, \mu))\\
{} * (\mathbf{token}(s_2) \land \mathsf{sendToNick\_}(s_2, n_2, \mu))
\end{array}\right\}}\\
\quad \mathbf{loop}\ \{\\
\quad\quad \mathbf{let}\ m := \mathsf{receive}(\mathsf{roomChan})\ \mathbf{in}\\
\quad\quad \mathsf{sendToNick}(n_1, m); \mathsf{sendToNick}(n_2, m)\\
\quad \}\\
\quad \annot{\{\mathbf{false}\}}\\
\}
\end{array}$$
\caption{Chat server implementation}\label{fig:chat-impl}
\end{figure}

We wish to specify and verify each of these threads in a way that abstracts over the fact that all threads are running in the same process and communicating through an in-process shared queue. Indeed, as far as functions $\mathsf{pumpFromNick}$ and $\mathsf{pumpRoom}$ are concerned, the channel might as well be an inter-process or inter-machine communication construct, so it makes sense that we specify sending and receiving on the channel exactly analogously to the network APIs $\mathsf{sendToNick}$ and $\mathsf{receiveFromNick}$:
$$\begin{array}{l}
\annot{\{\mathbf{token}(s) \land \mathsf{send\_}(s, c, m\cdot\mu)\}}\\
\mathsf{send}(c, m)\\
\annot{\{\exists s'.\;\mathbf{token}(s') \land \mathsf{send\_}(s', c, \mu)\}}\\
\\
\annot{\{\mathbf{token}(r) \land \mathsf{receive\_}(r, c, m\cdot\mu)\}}\\
\mathsf{receive}(c)\\
\annot{\{\exists r'.\;\mathbf{token}(r') \land \mathsf{receive\_}(r', c, \mu) \land \mathsf{res} = m\}}
\end{array}$$
Verifying the implementations of functions $\mathsf{pumpFromNick}$ and $\mathsf{pumpRoom}$ against their specifications is straightforward. Verifying the main function $\mathsf{serveChatRoom}$ is straightforward as well, provided that we may assume the following specification for function $\mathsf{newChannel}$:\footnote{For simplicity, this specification is specialized for the case of two senders and one receiver.}
$$\begin{array}{l}
\annot{\{\mathbf{true}\}}\\
\mathsf{newChannel}()\\
\annot{\left\{\begin{array}{@{} l @{}}
(\mathbf{token}(s_1) \land \mathsf{send\_}(s_1, \mathsf{res}, \mu_1))\\
{} * (\mathbf{token}(s_2) \land \mathsf{send\_}(s_2, \mathsf{res}, \mu_2))\\
{} * (\exists \mu \in \mu_1\ ||\ \mu_2.\;\mathbf{token}(r) \land \mathsf{receive\_}(r, \mathsf{res}, \mu))
\end{array}\right\}}
\end{array}$$

\section{A programming language with prophecy variables}\label{sec:prophvars}

Notice that verifying an implementation of the channel construct, for example in terms of a shared queue, against these specifications is not possible using a straightforward application of our Hoare logic from \S\ref{sec:hoare-logic}, or other existing program logics such as Iris \cite{iris3}. Indeed, in these logics, the precondition of $\mathsf{receive}$ constrains only the pre-state of a call of $\mathsf{receive}$. Since, starting from this same pre-state, many different thread schedulings, and, consequently, many different return values of $\mathsf{receive}$ are generally possible, there is no relationship between the pre-state and the result value so the precondition cannot express such a relationship. A solution to this problem has been long known, however; it is known as \emph{prophecy variables} \cite{abadi-lamport,Zhang2012}.

In order to apply our abstract I/O specification and verification approach to in-memory input constructs such as channels in the context of logics such as Iris, then, we propose to apply this well-known idea of prophecy variables. In particular, we propose to apply such logics not directly to the actual programming language and program involved, but to a version of the programming language and the program \emph{instrumented} with prophecy variables. The end-to-end approach for verifying a program, then, is to first verify correctness of the instrumented version using a logic like Iris, and then to apply an \emph{erasure} theorem that maps this correctness property to a corresponding property of the original program.

In this paper, we assume that the correctness property of interest can be expressed as the program not getting stuck. This is true whenever the property can be translated into run-time checks inserted into the program. The erasure theorem needed, then, is simply that if no instrumented execution gets stuck, then no erased execution gets stuck.

In the remainder of this section, we first elaborate this idea for simple prophecy variables that are assigned an arbitrary value once. We then discuss prophecy variables to which a \emph{sequence} of values is assigned incrementally. Next, we show how to build \emph{constrained} prophecy variables on top of these in the logic. In the next sections, we show how these prophecy variables can be used to verify a channel implementation against the abstract I/O specifications proposed above.

\subsection{Simple prophecy variables}

Simple prophecy variables can be added to a programming language by adding a type of \emph{prophecy variable identifiers}, a command $\mathbf{create\_pvar}()$ for allocating and returning a prophecy variable identifier and associating with it a \emph{prophecy value}, and a command $\mathbf{assign\_pvar}(\iota, v)$, which assigns value $v$ to the prophecy variable with identifier $\iota$. After the assignment operation completes, we have that the assigned value equals the prophecy value.

We can think of these operations operationally as follows: the creation operation picks an arbitrary prophecy value nondeterministically; the assignment operation either does nothing if the assigned value equals the prophecy value, or enters an infinite loop otherwise.

We formalize this as follows. Suppose the base language's semantics is defined using a small-step relation $(\sigma, e) \rightarrow (\sigma', e')$, where $\sigma$ ranges over states and $e$ over expressions, including a rule
$$\inferrule{
(\sigma, e) \hookrightarrow (\sigma', e')
}{
(\sigma, K[e/\bullet]) \rightarrow (\sigma', K[e'/\bullet])
}$$
for lifting step rules over evaluation contexts $K$. An example of such a language is our programming language from \S\ref{sec:lang-with-io} and its monitoring semantics defined in \S\ref{sec:monitoring}, where states $\sigma = (T, h)$ consist of a set of I/O traces $T$ and a heap $h$. We extend the definitions of expressions, values, and evaluation contexts as follows:
$$\begin{array}{r l}
& \iota \in \mathit{PVarIdents}\\
e ::= & \cdots\ |\ \mathbf{create\_pvar}()\ |\ \mathbf{assign\_pvar}(e, e)\\
v ::= & \cdots\ |\ \iota\\
K ::= & \cdots\ |\ \mathbf{assign\_pvar}(\bullet, e)\ |\ \mathbf{assign\_pvar}(\iota, \bullet)
\end{array}$$
We extend states with a \emph{prophecy heap} $\rho : \mathit{PVarIdents} \rightharpoonup_\mathrm{fin} \mathit{Vals}$, a finite partial function from prophecy variable identifiers to values. Existing commands leave the prophecy heap unchanged; the step rules for the new commands are as follows:
\begin{mathpar}
\inferrule[CreatePVar]{
\iota \notin \mathrm{dom}(\rho)
}{
(\sigma, \rho, \mathbf{create\_pvar}()) \hookrightarrow (\sigma, \rho[\iota:=v], \iota)
}
\and
\inferrule[AssignPVarMatch]{
\iota \in \mathrm{dom}(\rho)\\
\rho(\iota) = v
}{
(\sigma, \rho, \mathbf{assign\_pvar}(\iota, v)) \hookrightarrow (\sigma, \rho[\iota:=\bot], ())
}
\and
\inferrule[AssignPVarNoMatch]{
\iota \in \mathrm{dom}(\rho)\\
\rho(\iota) \neq v
}{
(\sigma, \rho, \mathbf{assign\_pvar}(\iota, v)) \hookrightarrow (\sigma, \rho, \mathbf{assign\_pvar}(\iota, v))
}
\end{mathpar}
Notice that successfully assigning a prophecy variable removes it from the prophecy heap.

From this semantics, we can derive the following Hoare logic proof rules for the new commands:
$$\begin{array}{c}
\annot{\{\mathbf{true}\}}\ \mathbf{create\_pvar}()\ \annot{\{\exists v.\;\mathsf{pvar}(\mathsf{res}, v)\}}\\
\annot{\{\mathsf{pvar}(\iota, v)\}}\ \mathbf{assign\_pvar}(\iota, v')\ \annot{\{v = v'\}}
\end{array}$$
where the assertion $\mathsf{pvar}(\iota, v) = \{(\sigma, \rho)\ |\ \iota \in \mathrm{dom}(\rho) \land \rho(\iota) = v\}$ denotes the existence of a prophecy variable with identifier $\iota$ and prophecy value $v$.

For verification using prophecy variables to be sound, we need the property that if a program instrumented with prophecy variables does not get stuck, then the original program also does not get stuck. This is not immediately obvious, because of the \textsc{AssignPVarNoMatch} rule. Therefore, we first prove erasure to an intermediate semantics for instrumented programs that still tracks prophecy variables (in particular, it tracks the set $I$ of allocated prophecy variable identifiers) but that does not have the looping behavior:
\begin{mathpar}
\inferrule[CreatePVar-I]{
\iota \notin I
}{
(\sigma, I, \mathbf{create\_pvar}()) \hookrightarrow_\mathrm{I} (\sigma, I \cup \{\iota\}, \iota)
}
\and
\inferrule[AssignPVar-I]{
\iota \in I
}{
(\sigma, I, \mathbf{assign\_pvar}(\iota, v)) \hookrightarrow_\mathrm{I} (\sigma, I \setminus\{\iota\}, ())
}
\end{mathpar}
The essential property of the instrumented semantics is that if a configuration is reachable in the intermediate semantics, then \emph{every} corresponding configuration, i.e.~with every possible assignment of prophecy values to the allocated prophecy variables, is reachable in the instrumented semantics:
\begin{lemma}
If $(\sigma_0, \emptyset, e_0) \rightarrow_\mathrm{I}^* (\sigma, I, e)$ and $\mathrm{dom}(\rho) = I$ then $(\sigma_0, \emptyset, e_0) \rightarrow^* (\sigma, \rho, e)$.
\end{lemma}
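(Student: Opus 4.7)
The plan is to convert the given intermediate trace into a matching instrumented trace by running the instrumented semantics lock-step with it, making precisely those nondeterministic choices at each \textsc{CreatePVar} step that guarantee later assignments match. I would proceed by induction on the length $n$ of the intermediate derivation, after first using foresight over the full trace to fix a function $f$ assigning a prophecy value to every identifier ever allocated.

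Write the intermediate execution as $(\sigma_0,\emptyset,e_0) = c_0 \rightarrow_\mathrm{I} c_1 \rightarrow_\mathrm{I} \cdots \rightarrow_\mathrm{I} c_n = (\sigma,I,e)$ with $c_i = (\sigma_i, I_i, e_i)$. Because \textsc{AssignPVar-I} removes $\iota$ from $I$ and \textsc{CreatePVar-I} requires freshness, every identifier created during the trace is assigned at most once; so the following $f$ is well-defined: $f(\iota) := v$ if $\iota$ is assigned value $v$ at some step after its creation, and $f(\iota) := \rho(\iota)$ otherwise (in which case $\iota$ survives into $I = \mathrm{dom}(\rho)$). I then define prophecy heaps $\rho_i$ forward along the trace by $\rho_0 := \emptyset$; at each \textsc{CreatePVar-I} step that adds $\iota$ to the tracking set, $\rho_{i+1} := \rho_i[\iota := f(\iota)]$; at each \textsc{AssignPVar-I} step assigning $\iota$, $\rho_{i+1} := \rho_i[\iota := \bot]$; and $\rho_{i+1} := \rho_i$ on any base-language step.

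An induction on $i$ then shows $(\sigma_0, \emptyset, e_0) \rightarrow^* (\sigma_i, \rho_i, e_i)$ with the invariant $\mathrm{dom}(\rho_i) = I_i$: a \textsc{CreatePVar-I} step is matched by \textsc{CreatePVar} picking exactly $f(\iota)$ (which is fresh in $\rho_i$ by the invariant), an \textsc{AssignPVar-I} step assigning $v$ to $\iota$ is matched by \textsc{AssignPVarMatch} since $\rho_i(\iota) = f(\iota) = v$ by construction, and base-language steps transfer across unchanged via the shared evaluation-context rule. At step $n$, $\mathrm{dom}(\rho_n) = I = \mathrm{dom}(\rho)$ and $\rho_n(\iota) = f(\iota) = \rho(\iota)$ for every $\iota \in I$, so $\rho_n = \rho$, yielding the required instrumented execution.

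The one subtle spot is the definition of $f$: one might worry that, because \textsc{CreatePVar} must commit to a prophecy value before the later assignment is observed, we cannot in general pick the ``right'' value. The resolution—and really the whole content of the lemma—is that we are constructing an instrumented execution from an \emph{already-given} intermediate one, so the future is fixed and we enjoy the luxury of hindsight. Consequently the \textsc{AssignPVarNoMatch} rule is never exercised in our construction; its semantic role of looping on a bad prediction is only needed to prevent alternative, badly-guessing executions from silently violating the prophecy, not to build the one witnessing execution demanded here.
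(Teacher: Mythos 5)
Your overall strategy is sound and is essentially the paper's: an induction on the length of the intermediate trace, matched step by step by an instrumented execution in which each \textsc{CreatePVar} guesses the value that a later assignment will supply. The paper organizes this slightly differently---since the lemma is stated for \emph{all} $\rho$ with $\mathrm{dom}(\rho)=I$, its induction hypothesis is already universally quantified over prophecy heaps, so at an \textsc{AssignPVar-I} step assigning $v$ to $\iota$ it simply instantiates the hypothesis with $\rho[\iota:=v]$ and applies \textsc{AssignPVarMatch}, with no need for any precomputed choice. Your version instead fixes the guesses up front from the given trace; these are two presentations of the same idea.

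There is, however, one step that fails as written: the claim that ``every identifier created during the trace is assigned at most once,'' on which the well-definedness of $f$ rests. Freshness in \textsc{CreatePVar-I} only requires $\iota \notin I$, and a successful \textsc{AssignPVar-I} removes $\iota$ from $I$ (just as \textsc{AssignPVarMatch} removes it from $\mathrm{dom}(\rho)$), so as the rules are stated an identifier may be created, assigned $v_1$, created again, and assigned $v_2 \neq v_1$. Your global $f$ is then ill-defined, and however you disambiguate it, one of the two assignments will fail to match the chosen prophecy value. The repair is local: the choice must be made per creation \emph{occurrence} rather than per identifier---at a creation of $\iota$ occurring at position $i$, set $\rho_{i+1}(\iota)$ to the value of the first assignment to $\iota$ after position $i$, or to $\rho(\iota)$ if no such assignment occurs before the end of the trace. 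With that change the rest of your argument, including the invariant $\mathrm{dom}(\rho_i)=I_i$ and the final identification $\rho_n=\rho$, goes through; alternatively, adopting the paper's $\forall\rho$-strengthened induction hypothesis avoids the need for any global choice function in the first place.
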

\begin{proof}
By induction on the number of steps and case analysis on the step rule. For rule \textsc{AssignPVar-I}, the induction hypothesis guarantees that a configuration with a prophecy value that matches the assigned value is reachable.
\end{proof}
We can now prove that if a configuration is safe under the instrumented semantics, then it is safe under the intermediate semantics:
\begin{theorem}
If $\mathrm{safe}((\sigma_0, \emptyset, e_0))$ then $\mathrm{safe}_\mathrm{I}((\sigma_0, \emptyset, e_0))$.
\end{theorem}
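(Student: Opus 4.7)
The plan is to prove the contrapositive: if the intermediate semantics can reach a failed configuration, then so can the instrumented semantics, contradicting the hypothesis. So suppose $(\sigma_0, \emptyset, e_0) \rightarrow_\mathrm{I}^* (\sigma, I, e)$ with $(\sigma, I, e)$ failed in the intermediate semantics. I first pick an arbitrary prophecy heap $\rho$ with $\mathrm{dom}(\rho) = I$ (possible because $\mathit{PVarIdents}$ and $\mathit{Vals}$ are populated enough). Applying the preceding lemma directly gives $(\sigma_0, \emptyset, e_0) \rightarrow^* (\sigma, \rho, e)$ in the instrumented semantics.

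Next I need to argue that $(\sigma, \rho, e)$ is itself failed in the instrumented semantics. The $e \notin \mathit{Vals}$ part is immediate since it holds in the intermediate configuration. For non-reducibility I do a case analysis on the head redex of $e$ (using the fact that reduction lifts uniformly through evaluation contexts, so head-irreducibility is equivalent to full irreducibility up to a ``decomposition'' of $e$ into $K[e'/\bullet]$). For head redexes belonging to the base language, the step rule acts identically in the two semantics, leaving the third component ($\rho$ or $I$) unchanged, so reducibility coincides. For $\mathbf{create\_pvar}()$, both \textsc{CreatePVar} and \textsc{CreatePVar-I} always fire (fresh identifiers exist), so this case cannot arise from a failed intermediate configuration. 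The interesting case is $\mathbf{assign\_pvar}(\iota, v)$: if $\iota \in I = \mathrm{dom}(\rho)$, then \textsc{AssignPVar-I} fires in the intermediate semantics, so the intermediate configuration would not be failed — contradiction; and if $\iota \notin I = \mathrm{dom}(\rho)$, then neither \textsc{AssignPVar-I} nor \textsc{AssignPVarMatch}/\textsc{AssignPVarNoMatch} applies, so the expression is irreducible in both semantics.

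Combining, $(\sigma, \rho, e)$ is a failed configuration reachable from $(\sigma_0, \emptyset, e_0)$ in the instrumented semantics, contradicting $\mathrm{safe}((\sigma_0, \emptyset, e_0))$. Hence $\mathrm{safe}_\mathrm{I}((\sigma_0, \emptyset, e_0))$.

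The main subtlety — and the step that required care — is the $\mathbf{assign\_pvar}$ case. It is tempting to worry that the instrumented semantics might add stuckness at mismatched prophecy values, but \textsc{AssignPVarNoMatch} is explicitly a self-loop step, so a mismatched assignment is never ``stuck'', merely divergent. This is precisely what makes reducibility of instrumented configurations agree with that of intermediate configurations whenever $\mathrm{dom}(\rho) = I$, and it is what drives the entire erasure argument.
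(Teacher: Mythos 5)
Your proof is correct and is essentially the paper's own argument: both rely on the preceding reachability lemma to transport the intermediate configuration to an instrumented one with $\mathrm{dom}(\rho) = I$, and both hinge on the same case analysis whose crux is that an \textsc{AssignPVarNoMatch} self-loop step (rather than stuckness) is what keeps reducibility in the two semantics aligned. The only difference is presentational — you phrase the step-correspondence contrapositively (failedness transfers from intermediate to instrumented) where the paper argues directly that each instrumented step can be mirrored by an intermediate one.
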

\begin{proof}
Assume $(\sigma_0, \emptyset, e_0) \rightarrow^*_\mathrm{I} (\sigma, I, e)$. Pick some arbitrary $\rho$ such that $\mathrm{dom}(\rho) = I$. By the lemma above, we have that $(\sigma_0, \emptyset, e_0) \rightarrow^* (\sigma, \rho, e)$. By the premise we have that this configuration can make a step. By case analysis on the step rule, it is easy to prove that the intermediate semantics can make a similar step. In particular, if the instrumented semantics makes an \textsc{AssignPVarNoMatch} step, the intermediate semantics can make an \textsc{AssignPVar-I} step. (This theorem does not claim that the resulting configurations correspond in any way.)
\end{proof}
A machine-checked version of this development is available cite{iris-io-2-0}.

Notice, now, that if an instrumented program does not get stuck under the intermediate semantics, then for this program the $\mathbf{assign\_pvar}$ command is equivalent to $\mathbf{skip}$. If furthermore we assume that no other constructs of the programming language allow inspection of a prophecy variable identifier (as is the case for the instrumented version of our programming language from \S\ref{sec:lang-with-io}), we can conclude that the type of prophecy variable identifiers is equivalent to the unit type and $\mathbf{create\_pvar}$ is equivalent to the unit value literal $()$.

\subsection{Incremental prophecy variables}

Above, we introduced a simple form of prophecy variables, which are assigned once. However, notice that in the case of the channels example, the prophecy value $\mu$ is a sequence where each element corresponds to a different receive operation, occurring at a different point in the execution of the program. To support this, we here propose a variant of prophecy variables where the prophecy value predicts a \emph{sequence} of assigned values.

The syntax of instrumented programs does not change.

We update the definition of prophecy heaps to map identifiers to sequences of values:
$$\rho : \mathit{PVarIdents} \rightharpoonup_\mathrm{fin} \mathit{Values}^\omega$$

We update the step rules as follows:
\begin{mathpar}
\inferrule[CreatePVar]{
\iota \notin \mathrm{dom}(\rho)
}{
(\sigma, \rho, \mathbf{create\_pvar}()) \hookrightarrow (\sigma, \rho[\iota:=\mu], \iota)
}
\and
\inferrule[AssignPVarMatch]{
\iota \in \mathrm{dom}(\rho)\\
\rho(\iota) = v\cdot\mu
}{
(\sigma, \rho, \mathbf{assign\_pvar}(\iota, v)) \hookrightarrow (\sigma, \rho[\iota:=\mu], ())
}
\and
\inferrule[AssignPVarNoMatch]{
\iota \in \mathrm{dom}(\rho)\\
\rho(\iota) = v'\cdot\mu\\
v \neq v'
}{
(\sigma, \rho, \mathbf{assign\_pvar}(\iota, v)) \hookrightarrow (\sigma, \rho, \mathbf{assign\_pvar}(\iota, v))
}
\end{mathpar}
Notice that after a successful assignment, the assigned value is popped from the front of the prophecy value in the prophecy heap.

This semantics allows us to derive the following Hoare rules:
$$\begin{array}{c}
\annot{\{\mathbf{true}\}}\ \mathbf{create\_pvar}()\ \annot{\{\exists \mu.\;\mathsf{pvar}(\mathsf{res}, \mu)\}}\\
\annot{\{\mathsf{pvar}(\iota, v\cdot\mu)\}}\ \mathbf{assign\_pvar}(\iota, v')\ \annot{\{\mathsf{pvar}(\iota, \mu) \land v = v'\}}
\end{array}$$
The intermediate semantics remains unchanged, and the erasure proof proceeds completely analogously.

\subsection{Constrained incremental prophecy variables}

Above, prophecy variable creation produces a completely arbitrary prophecy value. However, in the channels example we need to know at the point of channel creation that the sequence of received values will be an interleaving of the sequences of sent values. Therefore, we here propose a third type of prophecy variables, called \emph{constrained incremental prophecy variables}, that allow the specification, at prophecy variable creation time, of a \emph{constraint} on the sequence of assigned values. This constraint is enforced at assignment time, but we have, already at creation time, that the prophecy value satisfies the constraint. An important side condition for this to be sound is of course that the constraint be satisfiable.

We can in fact build this feature in the logic, on top of the instrumented semantics for unconstrained incremental prophecy variables proposed above. Specifically, we can prove the following Hoare proof rules, where $\mathsf{cpvar}(\iota, M, \mu)$ denotes a constrained prophecy variable where the sequence of assigned values is constrained to be in the set of sequences $M$.
$$\begin{array}{c}
\annot{\{M \neq \emptyset\}}\ \mathbf{create\_pvar}(M)\ \annot{\{\exists \mu.\;\mathsf{cpvar}(\mathsf{res}, M, \mu) \land \mu \in M\}}\\
\annot{\{\mathsf{cpvar}(\iota, M, v\cdot\mu) \land M[v'] \neq \emptyset\}}\  \mathbf{assign\_pvar}(\iota, v')\ \annot{\{\mathsf{cpvar}(\iota, M[v'], \mu) \land v = v'\}}
\end{array}$$
where $M[v] = \{\mu\ |\ v\cdot\mu \in M\}$.

When implementing constrained prophecy variable creation on top of unconstrained prophecy variable creation, there are two cases to consider after creating the unconstrained prophecy variable: either the prophecy value satisfies the constraint, or it does not. Our strategy, then, is as follows: if it does, expose this value as the prophecy value of the constrained prophecy variable. Otherwise, pick an arbitrary value that does satisfy the constraint, and expose that as the prophecy value of the constrained prophecy variable. At prophecy variable assignment time, it will turn out that the unconstrained prophecy value did satisfy the constraint after all.

To apply this idea to the incremental case, at prophecy variable creation time, intuitively we pick as the constrained prophecy value the sequence that satisfies the constraint and that \emph{maximally} matches the unconstrained prophecy value, i.e.~that has a maximal-length prefix that matches the unconstrained prophecy value. As assignments occur, it will gradually become clear that the whole unconstrained prophecy value did satisfy the constraint after all.

However, such a maximally-matching sequence does not necessarily exist. Indeed, consider the constraint $M_{\mathtt{A}\mathtt{B}} = \{\mathtt{A}^n\cdot\mathtt{B}^\omega\ |\ n \in \mathbb{N}\}$, that is, the set of all sequences consisting of a finite number of $\mathtt{A}$s followed by an infinite number of $\mathtt{B}$s. Now, consider the unconstrained prophecy value $\mu = \mathtt{A}^\omega$. There is no element of $M_{\mathtt{A}\mathtt{B}}$ that maximally matches $\mu$.

Note, however, that the proof rules proposed above do not enforce that the program adheres to a constraint such as $M_{\mathtt{A}\mathtt{B}}$. Indeed, a program that assigns $\mathtt{A}$s indefinitely can be verified using these proof rules. Still, note also that all finite prefixes of the execution of this program do adhere to the constraint.

Therefore, in this paper we restrict ourselves to \emph{partial correctness} verification, i.e.~verification of safety properties, rather than liveness properties. This means that it suffices to consider only each finite prefix of the executions of the program.

In this restricted setting, we can in fact solve the problem. We exploit the fact that if the execution prefix we are currently considering has length $n$, then at most $n$ prophecy variable assignments can occur in this execution. It follows that the elements of the constrained prophecy value at position $n$ and after will never be ``tested'' and therefore need not match the unconstrained prophecy value. Therefore, it suffices to pick a constrained prophecy value that maximally matches the unconstrained one \emph{up to the length of the execution prefix}.

We can encode this in Iris, a logic for partial correctness verification, using its \emph{later} operator: $\triangleright P$ (pronounced \emph{later $P$}) holds for execution prefixes of length $n$ if $P$ holds for execution prefixes of length $n' < n$. In particular, when considering an execution prefix of length 0, we have $\triangleright \mathsf{False}$. We will add a later operator to our Hoare logic in the next section.

Using the later operator, we can define the set $\mathsf{maxmatch}(\mu, M)$ of elements of $M$ that match $\mu$ maximally up to the length of the execution prefix, as follows:
$$\mathsf{maxmatch}(v\cdot\mu, M) = \left\{\begin{array}{@{} l l}
M & \textrm{if $M[v] = \emptyset$}\\
\{v\cdot\mu' \in M\ |\ \triangleright (\mu' \in \mathsf{maxmatch}(\mu, M[v]))\} & \textrm{otherwise}
\end{array}\right.$$

We can then define $$\mathsf{cpvar}(\iota, M, \mu) = \exists \mu'.\;\mathsf{pvar}(\iota, \mu') \land \mu \in \mathsf{maxmatch}(\mu', M)$$ From this definition, and the Hoare rules for incremental prophecy variables seen above, we can easily prove the proposed Hoare rules for constrained incremental prophecy variables.

We have developed a machine-checked version of constrained incremental prophecy variables in Coq and integrated it with the Iris logic \cite{iris-io-2-0}.

\section{A programming language with I/O, prophecy variables, and concurrency}\label{sec:concur}

In this section, we extend our programming language from \S\ref{sec:lang-with-io} with a $\mathbf{fork}$ command, an atomic load command $\langle !\ell\rangle$ and a compare-and-set command $\mathsf{CAS}(\ell, v_\mathrm{old}, v_\mathrm{new})$ (\S\ref{sec:concur-sem}), and we extend our Hoare logic with support for concurrency (including ghost cells, shared regions and shared region invariants) and a later operator (\S\ref{sec:concur-hoare}).

\subsection{Programming language syntax and semantics}\label{sec:concur-sem}

Starting from the programming language of \S\ref{sec:labeled}, we extend the syntax of expressions and, correspondingly, the syntax of evaluation contexts, as follows:
$$\begin{array}{r @{\ } l}
e ::= & \cdots\ |\ \mathbf{fork}(e)\ |\ \langle !e\rangle\ |\ \mathbf{CAS}(e, e, e)\\
K ::= & \cdots\ |\ \mathbf{fork}(K)\ |\ \langle !K\rangle\ |\ \mathbf{CAS}(K, e, e)\ |\ \mathbf{CAS}(v, K, e)\ |\ \mathbf{CAS}(v, v, K)
\end{array}$$
We update the definition of configurations to include a \emph{thread pool}, which is a list of expressions: $\gamma \in \mathit{Configs} = \mathit{States} \times \mathit{Exprs}^*$ where $\mathit{States} = \mathit{Heaps}$.

The labeled head reduction relation $\stackrel{\tau}{\hookrightarrow}$ now relates a pre-state and a pre-expression to a post-state, a post-expression, and a list of forked expressions. The step rules for the existing expressions are unchanged from \S\ref{sec:labeled}, except to specify that these expressions fork no threads. The new proof rules are as follows, where $=_\mathrm{dec}$ and $\neq_\mathrm{dec}$ are defined only on $\mathit{DataVals}$, the set of values constructed from $()$, $\mathbf{inl}$, $\mathbf{inr}$, and $({-}, {-})$:
\begin{mathpar}
h, \mathbf{fork}(e) \stackrel{\epsilon}{\hookrightarrow} h, (), e
\and
\inferrule{
\ell \in \mathrm{dom}(h)
}{
h, \langle !\ell\rangle \stackrel{\epsilon}{\hookrightarrow} h, h(\ell), \epsilon
}
\and
\inferrule{
\ell \in \mathrm{dom}(h)\\
h(\ell) =_\mathrm{dec} v_\mathrm{old}
}{
h, \mathbf{CAS}(\ell, v_\mathrm{old}, v_\mathrm{new}) \stackrel{\epsilon}{\hookrightarrow} h[\ell:=v], \mathbf{true}, \epsilon
}
\and
\inferrule{
\ell \in \mathrm{dom}(h)\\
h(\ell) \neq_\mathrm{dec} v_\mathrm{old}
}{
h, \mathbf{CAS}(\ell, v_\mathrm{old}, v_\mathrm{new}) \stackrel{\epsilon}{\hookrightarrow} h, \mathbf{false}, \epsilon
}
\and
\inferrule{
h, e \stackrel{\tau}{\hookrightarrow} h', e', \overline{e}_\mathrm{f}
}{
h, \overline{e} \cdot K[e/\bullet] \cdot \overline{e}' \stackrel{\tau}{\rightarrow} h', \overline{e}\cdot K[e'/\bullet] \cdot \overline{e}_\mathrm{f} \cdot \overline{e}'
}
\end{mathpar}

The definition of satisfaction of an I/O specification $T$ by a configuration given in \S\ref{sec:iospecs} applies unchanged.

We update the definition of a failed configuration as follows: a configuration has failed if any thread is not finished and not reducible:
$$\mathsf{failed}(h, \overline{e}) = \exists e \in \overline{e}.\;e \notin \mathit{Vals} \land \lnot\exists K, e', \tau, h', e'', \overline{e}_\mathrm{f}.\; e = K[e'/\bullet] \land h, e' \stackrel{\tau}{\hookrightarrow} h', e'', \overline{e}_\mathrm{f}$$

From this labeled semantics, we can construct a unlabeled, monitoring version, with configurations that include an I/O specification in the form of a prefix-closed set of traces, entirely analogously to how we did this for the single-threaded language in \S\ref{sec:monitoring}, in such a way that we obtain the following theorem:

\begin{theorem}
If $\mathsf{safe}(T, h, \overline{e})$ then $h, \overline{e} \vDash T$.
\end{theorem}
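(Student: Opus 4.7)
The plan is to mirror the proof of the single-threaded analogue from \S\ref{sec:monitoring}, with the thread pool threaded through in the obvious way. I first lift the monitoring semantics from \S\ref{sec:monitoring} to the concurrent setting: an instrumented configuration is $(T, h, \overline{e})$, and the step rule for an I/O head reduction in any thread becomes
$$\inferrule{
t(v,v') \in T
}{
(T, h, \overline{e}\cdot K[t(v)/\bullet]\cdot \overline{e}') \hookrightarrow (\{\tau\ |\ t(v,v')\cdot\tau \in T\}, h, \overline{e}\cdot K[v'/\bullet]\cdot \overline{e}')
}$$
with all other thread-pool step rules inherited unchanged from \S\ref{sec:concur-sem} and leaving $T$ alone. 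The predicate $\mathsf{safe}(T, h, \overline{e})$ means no monitored-failed configuration is reachable, where monitored-failed is defined exactly as the concurrent $\mathsf{failed}$ of \S\ref{sec:concur-sem} but using the monitored step relation.

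The next step is the concurrent version of the simulation lemma from \S\ref{sec:monitoring}: if $\tau \in T$ and $(h, \overline{e}) \stackrel{\tau}{\rightarrow^*} (h', \overline{e}')$ in the labeled semantics, then $(T, h, \overline{e}) \rightarrow^* (T'', h', \overline{e}')$ in the monitoring semantics, where $T'' = \{\tau'\ |\ \tau\cdot\tau' \in T\}$. This is a routine induction on the length of the labeled derivation and a case analysis on whether the thread step is an $\epsilon$-step (trivially matched, $T$ unchanged) or an I/O step $t(v,v')$ (matched by the new monitored rule, using $t(v,v') \in T$ via prefix-closedness of $T$ applied to the still-unconsumed suffix of $\tau$). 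Prefix-closedness of the residual trace sets is preserved along the way.

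The main proof unfolds $\vDash$: fix $\tau \in T$ and suppose $(h, \overline{e}) \stackrel{\tau}{\rightarrow^*} (h', \overline{e}')$. Apply the simulation lemma to obtain $(T, h, \overline{e}) \rightarrow^* (T'', h', \overline{e}')$, so by $\mathsf{safe}(T, h, \overline{e})$ the reached monitored configuration is itself safe, in particular not monitored-failed. For the non-failure clause, I observe that every monitored head reduction is also a labeled head reduction (with the same heap update and $t(v,v')$ relabeled from the monitored rule), so monitored-not-failed immediately implies labeled-not-failed for $(h', \overline{e}')$. For the I/O-progress clause, suppose some thread can perform $(h', \overline{e}') \stackrel{t(v,v')}{\rightarrow} \gamma''$; then the firing thread has the form $K[t(v)/\bullet]$ and is not a value, so by monitored-not-failedness this thread must be monitored-reducible, which, by the definition of the monitored I/O rule above, requires some $v''$ with $t(v,v'') \in T''$, i.e.\ $\tau\cdot t(v, v'') \in T$, as required.

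The only mild obstacle is bookkeeping: being careful that $\mathsf{failed}$ in the conclusion is the labeled predicate of \S\ref{sec:concur-sem} while the $\mathsf{safe}$ in the hypothesis refers to the monitored semantics, and that the ``reducible head $\Rightarrow$ reducible full thread under evaluation context'' direction is used uniformly on both sides. No genuinely new ideas beyond the single-threaded case are needed, because I/O reductions are thread-local and the monitoring semantics only constrains the I/O rule; the forking, $\mathbf{CAS}$, and atomic-load rules are $\epsilon$-labeled and thus pass through the simulation unchanged.
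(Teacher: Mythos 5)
Your proof is correct and takes essentially the same route the paper intends: the paper gives no explicit argument for this theorem, stating only that the concurrent monitoring semantics is constructed ``entirely analogously'' to the single-threaded one of \S\ref{sec:monitoring} so that the result carries over. Your lifted simulation lemma and the safety-implies-satisfaction argument (including the correct one-directional use of ``monitored head step implies labeled head step'' for non-failure, and unique decomposition for the I/O-progress clause) are precisely that analogue spelled out.
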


Starting from this concurrent programming language and its monitoring semantics, we can then obtain an instrumented language with I/O, concurrency, and incremental prophecy variables, and both an instrumented and an intermediate semantics for it, exactly as described in \S\ref{sec:prophvars}, that satisfy the erasure theorem. Configurations $\gamma = (T, h, \rho, \overline{e})$ of the instrumented semantics consist of an I/O specification $T$ (a prefix-closed set of I/O traces), a heap $h$, a prophecy heap $\rho$, and a thread pool $\overline{e}$ (a list of expressions).

\subsection{Hoare logic}\label{sec:concur-hoare}

We here present a minimal Hoare logic that is sufficient to verify the chat server example, including the channels implementation. For a more complete logic, we refer to the literature (e.g.~\citeN{iris3}).
To reason about concurrency, our Hoare logic includes ghost cells, fractional permissions, and shared regions with (first-order) shared region invariants.

We define the set of \emph{chunks} $\alpha$ as follows, where $\hat{\ell} \in \mathit{GhostLocs}$ ranges over an infinite set of ghost locations and $\hat{v} \in \mathit{GhostVals}$ ranges over a set of ghost values, which can be picked arbitrarily for a given proof:
$$\alpha \in \mathit{Chunks} ::= \ell \mapsto v\ |\ \hat{\ell} \mapsto \hat{v}\ |\ \mathsf{pvar}(\iota, \mu)\ |\ \mathsf{token}(p)$$
We define the \emph{logical heaps} $H \in \mathit{LogHeaps} = \mathit{Chunks} \rightarrow \mathbb{R}^+$ as the functions from chunks to \emph{fractions}, which are nonnegative reals. We lift addition on reals pointwise to logical heaps: $H + H' = \lambda \alpha.\;H(\alpha) + H'(\alpha)$. Notation $\mathbf{0} = \lambda \_.\;0$ denotes the empty logical heap; $\llbrace\alpha\rrbrace = \mathbf{0}[\alpha:=1]$ denotes the logical heap that contains chunk $\alpha$ with fraction 1. We can interpret a marking $V$ as the logical heap $H$ where $H(\alpha) = V(p)$ if $\alpha = \mathsf{token}(p)$, and $H(\alpha) = 0$ otherwise. Similarly, we can interpret a heap $h$ as the logical heap $H$ where $H(\alpha) = 1$ if $\alpha = \ell \mapsto v$ and $\ell \in \mathrm{dom}(h)$ and $v = h(\ell)$, and $H(\alpha) = 0$ otherwise, and analogously for a ghost heap and a prophecy heap.

We define the set of \emph{shared region invariants} $I \in \mathit{Invs} = \mathfrak{P}(\mathit{LogHeaps})$ as the predicates over logical heaps. We define the set of assertions $P \in \mathit{Asns} \subseteq \mathfrak{P}(\mathbb{N} \times \mathfrak{P}(\mathit{Invs}) \times \mathit{LogHeaps}) = \{P\ |\ \forall (n, A, H) \in P, n', A'.\;n' \le n \land A \subseteq A' \Rightarrow (n', A', H) \in P\}$ as the predicates over the length of the execution trace, the set of allocated shared region invariants, and the owned chunk fractions, that are closed under reducing the length of the execution trace and extending the set of allocated shared regions.

We define separating conjunction $P * P' = \{(n, A, H)\ |\ \exists H_1, H_2.\;H = H_1 + H_2 \land (n, A, H_1) \in P \land (n, A, H_2) \in P'\}$. $\circledast_{i\in I}\;P_i = P_{i_1} * \cdots * P_{i_n}$, where $I = \{i_1, \dots, i_n\}$ is a finite index set, denotes iterated separating conjunction. We define $\boxed{I} = \{(n, A, H)\ |\ I \in A\}$ and $\triangleright P = \{(n, A, H)\ |\ \forall n' < n.\;(n', A, H) \in P\}$. We define $\mathsf{True} = \{(n, A, H)\ |\ \mathsf{true}\}$.

As in \S\ref{sec:hoare-logic}, we define postconditions as functions from result values to assertions.

We define the \emph{ghost heaps} $g \in \mathit{GhostHeaps} = \mathit{GhostLocs} \rightharpoonup_\mathrm{fin} \mathit{GhostVals}$ as the finite partial functions from ghost locations to ghost values.

We define the precedence relation as follows, where $A, A' \subseteq \mathit{Invs}$ range over finite sets of allocated shared region invariants:
$$\begin{array}{l}
P \sqsubseteq P' = \forall (n, A, H) \in P, H_\mathrm{inv} \in \circledast_{I \in A}\;I, H_\mathrm{env}, V, h, \rho, g.\; H + H_\mathrm{inv} + H_\mathrm{env} \le V + h + \rho + g \Rightarrow\\
\quad \exists V', H', A' \supseteq A, H'_\mathrm{inv} \in \circledast_{I \in A'}\;I, g'.\;\mathsf{Traces}_N(V') \subseteq \mathsf{Traces}_N(V)\\
\quad\quad\land\; H' + H'_\mathrm{inv} + H_\mathrm{env} \le V' + h + \rho + g' \land (n, A', H') \in P'
\end{array}$$

We define the weakest precondition $\mathsf{wp}(e, Q) \in \mathit{Asns}$ of an expression with respect to a postcondition $Q$ as follows:
$$\begin{array}{l}
\mathsf{wp}(e, Q) = \{(n, A, H)\ |\ (\forall v.\;e = v \Rightarrow (n, A, H) \in Q(v)) \land \forall n' < n, T, V, h, g, \rho, H_\mathrm{inv}, H_\mathrm{env}.\\
\quad \mathsf{Traces}_N(V) \subseteq T \land H_\mathrm{inv} \in \circledast_{I \in A}\; I \land H + H_\mathrm{inv} + H_\mathrm{env} \le V + h + g + \rho \Rightarrow\\
\quad \lnot\mathsf{failed}(T, h, \rho, e) \land \forall T', h', \rho', e', \overline{e}_\mathrm{f}.\\
\quad\quad T, h, \rho, e \hookrightarrow T', h', \rho', e', \overline{e}_\mathrm{f} \Rightarrow \exists H', H'_\mathrm{inv}, V', g', A' \supseteq A.\\
\quad\quad\quad \mathsf{Traces}_N(V') \subseteq T' \land H'_\mathrm{inv} \in \circledast_{I \in A'}\; I \land H' + H'_\mathrm{inv} + H_\mathrm{env} \le V' + h' + \rho' + g'\\
\quad\quad\quad {\land}\; (n', A', H') \in \mathsf{wp}(e', Q) * \circledast_{e'' \in \overline{e}}\; \mathsf{wp}(e'', \mathsf{True})\}
\end{array}$$
We define $\{P\}\ e\ \{Q\} = (P \subseteq \mathsf{wp}(e, Q))$. We can derive both the Hoare rules from Fig.~\ref{fig:hoare-logic} and the new ones in Fig.~\ref{fig:hoare-conc}.
\begin{figure}
\begin{mathpar}
\inferrule{
\{P\}\ e\ \{\mathsf{True}\}
}{
\{P\}\ \mathbf{fork}(e)\ \{\mathsf{True}\}
}
\and
\inferrule{
\{I * P\}\ !\ell\ \{I * Q\}
}{
\{\boxed{I} * P\}\ \langle !\ell\rangle\ \{\boxed{I} * Q\}
}
\and
\inferrule{
v_\mathrm{old} \in \mathit{DataVals}\\
\{I * P\}\ \mathbf{if}\ !\ell = v_\mathrm{old}\ \mathbf{then}\ \ell \leftarrow v_\mathrm{new}; \mathbf{true}\ \mathbf{else}\ \mathbf{false}\ \{I * Q\}
}{
\{\boxed{I} * P\}\ \mathbf{CAS}(\ell, v_\mathrm{old}, v_\mathrm{new})\ \{\boxed{I} * Q\}
}
\and
I \sqsubseteq \boxed{I}
\and
\mathsf{True} \sqsubseteq \exists \hat{\ell}.\;\hat{\ell} \mapsto \hat{v}
\end{mathpar}
\caption{Proof rules of the concurrent Hoare logic}\label{fig:hoare-conc}
\end{figure}

We have adequacy:
\begin{theorem}
If $\{(n, A, H)\ |\ V \le H\} \subseteq \mathsf{wp}(e, \mathsf{True})$ then $\forall T, h, \rho.\;\mathsf{Traces}_N(V) \subseteq T \Rightarrow \mathsf{safe}((T, h, \rho, e))$.
\end{theorem}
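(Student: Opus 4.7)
The plan is to establish adequacy by the standard step-indexed argument: show that the weakest-precondition predicate, lifted over the thread pool, is preserved by every reduction and implies non-failure of every thread, then close the induction by choosing the step index large enough for each finite reach.

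First I would fix $T, h, \rho$ with $\mathsf{Traces}_N(V) \subseteq T$, fix $k \ge 0$ and a $k$-step reduction sequence, and set $n_0 = k + 1$. Instantiating the hypothesis at $H = V$ (since $V \le V$) and $A = \emptyset$ yields $(n_0, \emptyset, V) \in \mathsf{wp}(e, \mathsf{True})$, and taking $H_\mathrm{inv} = \mathbf{0}$, $g = \mathbf{0}$ gives the initial matching $V \le V + h + g + \rho$. I would then prove, by induction on $j \in \{0, \ldots, k\}$, the invariant: there exist $A_j$, $H_j$, $V_j$, $g_j$, and $H_{\mathrm{inv},j} \in \bigast_{I \in A_j} I$ such that $\mathsf{Traces}_N(V_j) \subseteq T_j$, $H_j + H_{\mathrm{inv},j} \le V_j + h_j + \rho_j + g_j$, and $(n_0 - j, A_j, H_j) \in \bigast_{e'' \in \overline{e}_j} \mathsf{wp}(e'', \mathsf{True})$, where $(T_j, h_j, \rho_j, \overline{e}_j)$ is the $j$-th configuration.

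The base case is immediate. For the inductive step, suppose thread $i$ is rewritten by a head reduction inside an evaluation context. Splitting the iterated separating conjunction at index $i$ gives $H_j = H^i_j + H^{-i}_j$ with $(n_0 - j, A_j, H^i_j) \in \mathsf{wp}(e^i_j, \mathsf{True})$ and the remainder term for the other threads. A routine induction on evaluation contexts reduces to the head-reduction case. Instantiating that thread's $\mathsf{wp}$ at $n' = n_0 - j - 1$, with frame $H_\mathrm{env} = H^{-i}_j + H_{\mathrm{inv},j}$ and the resource matching from the IH, yields both non-failure of thread $i$ and witnesses $A_{j+1} \supseteq A_j$, $H^i_{j+1}$, $V_{j+1}$, $g_{j+1}$, $H_{\mathrm{inv},j+1} \in \bigast_{I \in A_{j+1}} I$ satisfying $(n_0 - j - 1, A_{j+1}, H^i_{j+1}) \in \mathsf{wp}(e^i_{j+1}, \mathsf{True}) * \bigast_{e'' \in \overline{e}_\mathrm{f}} \mathsf{wp}(e'', \mathsf{True})$ together with $\mathsf{Traces}_N(V_{j+1}) \subseteq T_{j+1}$. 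The declared downward closure of assertions under shrinking $n$ and enlarging $A$ lifts the other threads' $\mathsf{wp}$ from $(n_0 - j, A_j)$ to $(n_0 - j - 1, A_{j+1})$; recombining yields the invariant at $j + 1$. For non-failure of the whole pool at step $j$, the same instantiation applied to each non-value thread in $\overline{e}_j$ gives reducibility of that thread, so $\lnot\mathsf{failed}(T_j, h_j, \rho_j, \overline{e}_j)$. Since $k$ and the reduction sequence were arbitrary, $\mathsf{safe}((T, h, \rho, e))$ follows.

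The main obstacle is the resource bookkeeping across a step: one must simultaneously update the marking $V_j$ consistently with any I/O action contracting $T_j$, thread the growing invariant set $A_j$, refresh the ghost heap, and split the logical heap between stepping and non-stepping threads, all while preserving the $\mathsf{Traces}_N(V_j) \subseteq T_j$ refinement. The step-indexing itself is benign once $n_0 = k + 1$ is fixed up front. What carries the proof is the pointwise monotonicity of assertions in $n$ and $A$, the fact that separating conjunction lets a single-thread $\mathsf{wp}$ be applied under an arbitrary heap frame, and the observation that for an I/O head reduction the witness $V_{j+1}$ demanded by $\mathsf{wp}$ is exactly the successor marking of the Petri-net transition rule, which is what ensures the trace-refinement side condition propagates.
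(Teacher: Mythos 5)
The paper states this adequacy theorem without giving a proof (it defers to the machine-checked Iris/Coq development), so there is no textual proof to compare against; your argument is the standard step-indexed adequacy proof and its overall structure is sound: instantiate the hypothesis at $(k+1,\emptyset,V)$ for a given $k$-step reduction, maintain by induction on steps an iterated-separating-conjunction invariant over the thread pool at decreasing index with the resource inequality and $\mathsf{Traces}_N(V_j)\subseteq T_j$ threaded through, and read off non-failure of every thread at each reachable configuration. One instantiation in your inductive step is off, though: you write that you apply the stepping thread's $\mathsf{wp}$ with $H_\mathrm{env}=H^{-i}_j+H_{\mathrm{inv},j}$, but the definition of $\mathsf{wp}$ requires its $H_\mathrm{inv}$ argument to satisfy $H_\mathrm{inv}\in\circledast_{I\in A_j}I$ and returns a fresh $H'_\mathrm{inv}\in\circledast_{I\in A'}I$; folding $H_{\mathrm{inv},j}$ into the inert environment frame instead would leave that premise unsatisfiable once $A_j\neq\emptyset$ and would double-count the invariant heap in the returned inequality. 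The correct instantiation is $H_\mathrm{inv}=H_{\mathrm{inv},j}$ and $H_\mathrm{env}=H^{-i}_j$ (plus, for the whole-pool non-failure check, the analogous split at each non-value thread). Since your stated induction invariant already carries $H_{\mathrm{inv},j}\in\circledast_{I\in A_j}I$ as a separate component precisely so it can be handed to $\mathsf{wp}$, this is a local slip rather than a structural flaw; with that correction, and reading the head-reduction arrow in the definition of $\mathsf{wp}$ as the thread-local (context-closed) step relation as you implicitly do, the proof goes through.
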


\section{Verifying a channel implementation using prophecy variables}\label{sec:channels-proof}

Consider an implementation in Fig.~\ref{fig:channels-impl} of the channels specification introduced above. While this is not a very realistic implementation (for example, it performs busy waiting), it is sufficient to illustrate our proposed approach of using prophecy variables to achieve I/O-style specifications for in-memory data structures.
\begin{figure}
$$\begin{array}{l}
\mathbf{function}\ \mathsf{newChannel}()\ \{\\
\quad \mathbf{let}\ \mathsf{pvar} := \mathbf{create\_pvar}()\ \mathbf{in}\\
\quad \mathbf{let}\ \mathsf{queue} := \mathbf{ref}\ \epsilon\ \mathbf{in}\\
\quad \mathbf{return}\ \{\mathsf{pvar}:=\mathsf{pvar}; \mathsf{queue}:=\mathsf{queue}\}\\
\}\\
\\
\mathbf{function}\ \mathsf{send}(c, v)\ \{\\
\quad \mathbf{let}\ \mathsf{elems} := \langle!c.\mathsf{queue}\rangle\ \mathbf{in}\\
\quad \mathbf{if}\ \lnot\mathsf{CAS}(c.\mathsf{queue}, \mathsf{elems}, \mathsf{elems}\cdot v)\ \mathbf{then}\\
\quad\quad \mathsf{send}(c, v)\\
\}\\
\\
\mathbf{function}\ \mathsf{receive}(c)\ \{\\
\quad \mathbf{let}\ \mathsf{elems} := \langle!c.\mathsf{queue}\rangle\ \mathbf{in}\\
\quad \mathbf{match}\ \mathsf{elems}\ \mathbf{with}\\
\quad |\ \epsilon \Rightarrow \mathbf{return}\ \mathsf{receive}(c)\\
\quad |\ v\cdot\overline{v} \Rightarrow\\
\quad\quad \mathbf{if}\ \mathsf{CAS}(c.\mathsf{queue}, \mathsf{elems}, \overline{v})\ \mathbf{then}\ \{\\
\quad\quad\quad \mathbf{assign\_pvar}(c.\mathsf{pvar}, v); \mathbf{return}\ v\\
\quad\quad \}\ \mathbf{else}\\
\quad\quad\quad \mathbf{return}\ \mathsf{receive}(c)\\
\}
\end{array}$$
\caption{An implementation of the channels specifications. (We use ML-like structure notation as syntactic sugar for pair construction and destruction.)}\label{fig:channels-impl}
\end{figure}
The proof strategy is as follows. To allow the channel's queue to be accessed concurrently by multiple threads, we insert it into a shared region at channel creation time, whose invariant is as follows:
$$\begin{array}{l}
\mathsf{inv}(c, \hat{\ell}_\mathrm{S1}, \hat{\ell}_\mathrm{S2}, \hat{\ell}_\mathrm{R}) = \exists \overline{v}, \mu_1, \mu_2, M.\;c.\mathsf{queue} \mapsto \overline{v} * \hat{\ell}_\mathrm{S1} \stackrel{1/2}{\mapsto} \mu_1 * \hat{\ell}_\mathrm{S2} \stackrel{1/2}{\mapsto} \mu_2 * \hat{\ell}_\mathrm{R} \stackrel{1/2}{\mapsto} M\\
\quad {} \land \forall \mu' \in \mu_1\ ||\ \mu_2.\;\overline{v}\cdot\mu' \in M\\
\end{array}$$
It asserts full ownership of the queue, as well as \emph{fractional ownership} (with fraction one half) of three \emph{ghost cells} whose values track the state of the three threads using the channel. Furthermore, it asserts the consistency of the threads' states and the contents of the queue: these contents, followed by any interleaving of the sequences yet to be sent by the sender threads, satisfy the receiver thread's current prophecy variable constraint.

Predicates $\mathsf{sender}$ and $\mathsf{receiver}$ describe the resources and information held by the three threads:
$$\begin{array}{l}
\mathsf{sender}(c, \mu) = \exists \hat{\ell}_\mathrm{S}, \hat{\ell}_\mathrm{S1}, \hat{\ell}_\mathrm{S2}, \hat{\ell}_\mathrm{R}.\;\boxed{\mathsf{inv}(c, \hat{\ell}_\mathrm{S1}, \hat{\ell}_\mathrm{S2}, \hat{\ell}_\mathrm{R})} \land \hat{\ell}_\mathrm{S} \stackrel{1/2}{\mapsto} \mu \land \hat{\ell}_\mathrm{S} \in \{\hat{\ell}_\mathrm{S1}, \hat{\ell}_\mathrm{S2}\}\\
\mathsf{receiver}(c, \mu) = \exists \hat{\ell}_\mathrm{S1}, \hat{\ell}_\mathrm{S2}, \hat{\ell}_\mathrm{R}, M.\;\boxed{\mathsf{inv}(c, \hat{\ell}_\mathrm{S1}, \hat{\ell}_\mathrm{S2}, \hat{\ell}_\mathrm{R})}\\
\quad {} \land \mathsf{cpvar}(c.\mathsf{pvar}, M, \mu) * \hat{\ell}_\mathrm{R} \stackrel{1/2}{\mapsto} M
\end{array}$$
Notice that each thread owns a one-half fraction of the ghost cell tracking its state.

Owning just a fraction of a ghost cell does not allow a thread to mutate it; the following law allows a thread that performs an atomic operation to do so after matching up its fraction with the one held in the shared invariant:
$$\hat{\ell} \stackrel{1/2}{\mapsto} v_1 * \hat{\ell} \stackrel{1/2}{\mapsto} v_2 \Leftrightarrow \hat{\ell} \mapsto v_1 \land v_2 = v_1$$

We can now straightforwardly define the $\mathsf{send\_}$ and $\mathsf{receive\_}$ transition predicates:
$$\begin{array}{l}
\mathsf{send\_}(p, c, \mu) = p \sqsubseteq \mathsf{sender}(c, \mu)\\
\mathsf{receive\_}(p, c, \mu) = p \sqsubseteq \mathsf{receiver}(c, \mu)
\end{array}$$
Given these definitions, the proof of the channel implementation is straightforward.

We have developed a machine-checked proof that the chat server example and the channels implementation satisfy the specifications shown here by applying Iris to a programming language instrumented with I/O and prophecy variables as described in this paper \cite{iris-io-2-0}.

We have also encoded the approach into the logic of the VeriFast program verifier for C and verified C versions of the buffered I/O and chat server examples using VeriFast; see the \textsf{examples/abstract\_io} directory in the VeriFast 18.02 distribution \cite{verifast1802}. In this development, the constrained incremental prophecy variables are introduced as trusted primitives.

\section{Related work}\label{sec:related-work}

Penninckx \emph{et al.} \cite{io} originally proposed to use an embedding of Petri nets in separation logic to specify and verify I/O properties of programs in a way that is modular and compositional. The present paper builds on this work.

Besides \cite{io}, we are not aware of existing work that addresses the question of what specifications to use for platform I/O APIs, and for the program as a whole, such that a Hoare logic proof of the program's behavioral properties can be carried out modularly.

The idea of prophecy variables has been known for a long time; it was originally proposed by Abadi and Lamport \cite{abadi-lamport}. However, little work has appeared so far on formalizing its use in Hoare logics. Vafeiadis used it in his PhD thesis \cite{vafeiadis-phd}, but he did not formalize this aspect of his logic.

Zhang \emph{et al.} \cite{Zhang2012} have formalized a form of \emph{structural} prophecy variables. However, their form of prophecy variables does not appear to be suitable for implementing data structures with I/O-stype specifications. or example, it cannot be used to verify our chat server example.

% Refinement!
% FSCL histories
% FSCL alternative to prophecies

\section{Conclusion}\label{sec:conclusion}

We propose an approach for assigning Hoare logic specifications to the I/O APIs of programming platforms, as well as to programs themselves, that allows the I/O behavior of these programs to be verified in a modular, compositional, and abstract manner. Compared to the existing work on which we build, we enable true I/O actions to be mixed in specifications transparently with actions that are implemented in-memory. We propose the use of constrained, incremental prophecy variables to allow nondeterministic in-memory operations to be specified like I/O input actions. Furthermore, we enable a greater degree of equivalence reasoning on specifications.

We have machine-checked the theory and the example proofs of the paper using the Iris library in the Coq proof assistant.

%% Acknowledgments
%\begin{acks}                            %% acks environment is optional
                                        %%% contents suppressed with 'anonymous'
  %%% Commands \grantsponsor{<sponsorID>}{<name>}{<url>} and
  %%% \grantnum[<url>]{<sponsorID>}{<number>} should be used to
  %%% acknowledge financial support and will be used by metadata
  %%% extraction tools.
  %This material is based upon work supported by the
  %\grantsponsor{GS100000001}{National Science
    %Foundation}{http://dx.doi.org/10.13039/100000001} under Grant
  %No.~\grantnum{GS100000001}{nnnnnnn} and Grant
  %No.~\grantnum{GS100000001}{mmmmmmm}.  Any opinions, findings, and
  %conclusions or recommendations expressed in this material are those
  %of the author and do not necessarily reflect the views of the
  %National Science Foundation.
%\end{acks}

%% Bibliography
\bibliography{io2}

%% Appendix
\appendix
\section{Appendix}

\subsection{I/O verification with Iris}

By instantiating the Iris program logic for our monitoring I/O semantics, we can use Iris' Hoare logic to verify programs that perform I/O.

In general, when using Iris to verify a program written in a programming language whose state space is $S$, ranged over by $s$, the assertions of the proof refer to the program state by directly or indirectly asserting \emph{fragmentary ownership} of a \emph{ghost cell} allocated at some well-known \emph{ghost cell address}, say $\gamma_\mathrm{STATE}$. For example, consider the Hoare triple below, which expresses that program $\mathit{foo}$ takes program state $s_0$ to program state $s_1$:

$$\annot{\{\dboxed{\circ s_0}^{\gamma_\mathrm{STATE}}\}}\ \mathit{foo}\ \annot{\{\dboxed{\circ s_1}^{\gamma_\mathrm{STATE}}\}}$$

The state space of our instrumented semantics consists of the states $\sigma$ of the original programming language (let's call these \emph{heaps}), and \emph{I/O states} $T$. For convenience, we will track these using separate ghost cells $\gamma_\mathrm{HEAP}$ and $\gamma_\mathrm{IO}$.

For example, assume $\mathsf{putchar} \in \mathcal{T}$ is a primitive I/O tag. Then $\mathit{hi} = \mathsf{putchar}(\texttt{'h'}); \mathsf{putchar}(\texttt{'i'})$ is a program, that satisfies specification $T = \{\tau\ |\ \tau \preceq \mathsf{putchar}(\texttt{'h'}, ()); \mathsf{putchar}(\texttt{'i'}, ())\}$, where $\preceq$ is prefixing: $\tau \preceq \tau \cdot \tau'$. We can express this in Iris as:

$$\annot{\{\dboxed{\circ T}^{\gamma_\mathrm{IO}}\}}\ \mathit{hi}\ \annot{\{\dboxed{\circ \{\epsilon\}}^{\gamma_\mathrm{IO}}\}}$$

This correctness judgment can be verified in Iris, using the following Hoare triple for primitive I/O commands:

$$\inferrule{
\exists v'.\;t(v, v') \in T
}{
\annot{\{\dboxed{\circ T}^{\gamma_\mathrm{IO}}\}}\ t(v)\ \annot{\{v'.\;\dboxed{\circ \{\tau\ |\ t(v, v')\cdot\tau \in T\}}^{\gamma_\mathrm{IO}}\}}
}$$

Note: we also have $\dboxed{\circ \emptyset}^{\gamma_\mathrm{IO}} \Rightarrow \mathsf{False}$.

(These specifications additionally express that the expressions do not access the heap.)

We can build the Petri nets-based logic from Sec.~\ref{sec:petrinets} on top of this basic logic, inside Iris, as follows. We introduce a ghost cell $\gamma_\mathrm{PETRI}$ that tracks the Petri net and the marking, and an invariant that links it to $\gamma_\mathrm{IO}$:
$$\begin{array}{l}
\dboxed{\circ \mathsf{traces}_N(V)}^{\gamma_\mathrm{IO}} \Rrightarrow\\
\quad \boxed{\exists V, T.\;\dboxed{\circ T}^{\gamma_\mathrm{IO}} * \dboxed{\bullet (N, V)}^{\gamma_\mathrm{PETRI}} \land \mathsf{traces}_N(V) \subseteq T} * \dboxed{\circ (N, V)}^{\gamma_\mathrm{PETRI}}
\end{array}$$

Furthermore, we allow splitting of markings: $$\dboxed{\circ (N, V_1 \uplus V_2)}^{\gamma_\mathrm{PETRI}} \Leftrightarrow \dboxed{\circ (N, V_1)}^{\gamma_\mathrm{PETRI}} * \dboxed{\circ (N, V_2)}^{\gamma_\mathrm{PETRI}}$$

This way, we can define the primitive predicates of Sec.~\ref{sec:petrinets} as follows:
$$\mathsf{token}(p) = \dboxed{\circ (\_, \llbrace p\rrbrace)}^{\gamma_\mathrm{PETRI}}$$
$$t\mathsf{\_}(p, v, v', q) = \exists N.\;\dboxed{\circ (N, \mathbf{0})}^{\gamma_\mathrm{PETRI}} \land t(p, v, v', q) \in N$$
We now wish to establish the following proof rule for primitive I/O commands:
$$\annot{\{\mathsf{token}(p) \land t{\mathsf{\_}}(p, v, v', q)\}}\ t(v)\ \annot{\{v''.\;v'' = v' \land \mathsf{token}(q)\}}$$

However, this proof rule is sound only if the Petri net is \emph{result-deterministic}, by which we mean that if $\{\tau \cdot t(v, v'), \tau \cdot t(v, v'')\} \subseteq \mathsf{traces}_N(V)$, then $v'' = v'$. We will therefore assume the Petri net is result-deterministic. (Furthermore, we extend the invariant above with a conjunct saying that $T$ is result-deterministic.) However, using a single result-deterministic Petri net as a program specification precludes underspecifying the environment. We will therefore use as a program specification not a single Petri net, but a set of result-deterministic Petri nets. More specifically, we will verify the program not against a single Petri net, but against all Petri nets from some set. We are in fact applying the following law, where $\mathbb{T}$ is a set of sets of traces:

$$(\forall T \in \mathbb{T}.\;e \vDash T) \Rightarrow e \vDash \bigcup \mathbb{T}$$

For example, in the $\mathsf{toUpper}$ example from Sec.~\ref{sec:petrinets} the program is verified against 26 distinct Petri nets, each of which determines the result of the $\mathsf{getchar}$ command differently.

\end{document}